\documentclass{article}
\usepackage{csquotes}
\usepackage{float}
\usepackage{setspace}
\usepackage{wrapfig}
\usepackage{mathtools}
\usepackage{geometry}
\usepackage{natbib}
\usepackage{pifont}
\usepackage{enumitem}
\usepackage{amsmath}
\usepackage{dsfont}
\usepackage{amssymb}
\usepackage{amsthm}
\usepackage{nth}
\usepackage{booktabs}
\usepackage{titletoc}
\usepackage{thm-restate,thm-autoref}
\usepackage{scalerel,stackengine}
\makeatletter
\def\moverlay{\mathpalette\mov@rlay}
\def\mov@rlay#1#2{\leavevmode\vtop{%
   \baselineskip\z@skip \lineskiplimit-\maxdimen
   \ialign{\hfil$\m@th#1##$\hfil\cr#2\crcr}}}
\newcommand{\charfusion}[3][\mathord]{
    #1{\ifx#1\mathop\vphantom{#2}\fi
        \mathpalette\mov@rlay{#2\cr#3}
      }
    \ifx#1\mathop\expandafter\displaylimits\fi}
\makeatother

\newcommand{\bigcupdot}{\charfusion[\mathop]{\bigcup}{\cdot}}
\usepackage[linesnumbered,ruled,vlined]{algorithm2e}
\usepackage[english]{babel}
\usepackage{tikz}
\RequirePackage[colorlinks,citecolor=black,linkcolor=blue,urlcolor=blue,pagebackref]{hyperref}
\usepackage{cleveref}
\usetikzlibrary{positioning}
\tikzset{main node/.style={circle,fill=blue!20,draw,minimum size=1cm,inner sep=0pt}, }
\tikzset{
main node/.style=
    {circle,fill=blue!20,draw,minimum size=1cm,inner sep=0pt},
root node/.style=
    {circle,fill=red!20,draw,minimum size=1cm,inner sep=0pt},
outcome node/.style={
    fill=gray!20,
    draw,
    minimum size=1.5cm,
    inner sep=0pt},
graph directed edge/.style={
    ->,
    >=stealth,
    thick,
  }, 
  graph twosided edge/.style={
    <->,
    >=stealth,
    ultra thick,
  },
  graph forward edge/.style={
    graph directed edge,
    every edge/.style={
      edge node={node [fill=white] {$\phi$}},
      densely dotted,
      draw,
    },
  },
graph implies edge/.style={
    graph twosided edge,
    every edge/.style={
      edge node={node [fill=white] {CP}},
      draw, ultra thick,
    },
  },}

\theoremstyle{plain}
\newtheorem{proposition}{Proposition}
\newtheorem{lemma}{Lemma}
\newtheorem{theorem}{Theorem}
\newtheorem{corollary}{Corollary}
\theoremstyle{remark}
\newtheorem*{remark}{Remark}

\newtheorem*{definition}{Definition}
\newtheorem*{example}{Example}
\newcommand{\phik}{\phi_{V\text{-}k}}
\newcommand{\phispa}{\phi_{V\text{-}1}}

\newcommand{\R}{\mathbb{R}}
\newcommand{\score}{\operatorname{score}}

\newcommand{\FPA}{\operatorname{FPA}}

\newcommand{\W}{\operatorname{W}}
\newcommand{\LQL}{\operatorname{LQL}}
\renewcommand{\L}{\operatorname{L}}
\newcommand{\LQW}{\operatorname{LQW}}

\newcommand{\median}{\operatorname{median}}
\DeclareMathOperator*{\argmax}{arg\,max}

\newcommand{\ASCJ}{\operatorname{asc-j}}
\newcommand{\ODESCJ}{\operatorname{odesc-j}}
\newcommand{\children}{\operatorname{children}}

\newcommand{\thresh}{\operatorname{thresh}}

\newcommand{\tsucc}{\operatorname{succ}}
\newcommand{\pred}{\operatorname{pred}}
\newcommand{\btheta}{\boldsymbol\theta}
\newcommand{\bsigma}{\boldsymbol{\sigma}}

\newcommand{\bTheta}{\boldsymbol\Theta}
\newcommand{\1}{\mathds{1}}

\newcommand{\RI}{\operatorname{RI}}
\newcommand{\reveal}{\operatorname{reveal}}

\newcommand{\DESC}{\operatorname{desc}}
\newcommand{\CP}{\operatorname{CP}}

\begin{document}

\title{Contextually Private Mechanisms}
\author{Andreas Haupt and Zo\"{e} Hitzig\footnote{Haupt: Stanford Digital Economy Lab, h4upt@stanford.edu. Hitzig: Harvard Society of Fellows, zhitzig@g.harvard.edu. We thank especially Ben Golub, Eric Maskin, and Shengwu Li. We also thank Yannai Gonczarowski, Jerry Green, Chang Liu, Dirk Bergemann, Stephen Morris, Mohammad Akbarpour, Ilya Segal, Paul Milgrom, Kevin He, Navin Kartik, Ludvig Sinander, Andrew Mackenzie, Clayton Thomas, Oghuzan Celebi, David Laibson, Modibo Camara, Katrina Ligett, Benji Niswonger and Max Simchowitz, as well as audiences in seminars, workshops and conferences (Harvard and MIT Theory Lunch workshops; seminars at Oxford, Stanford GSB, Boston College, NYU and Boston University; Penn State Economic Theory Conference, Penn/Wharton Mini Conference in Economic Theory; Stony Brook Game Theory Festival, Berkeley SLMath Workshop on Mechanism and Market Design). An extended abstract of an earlier version of this paper---comprised largely of the results now in \Cref{sec:full_cpimpossibilities} and \Cref{sec:generic}---appeared in Proceedings of the 23rd ACM Conference on Economics and Computation (EC 2022). This work was supported by a Microsoft Research PhD Fellowship (Hitzig).}}  

\maketitle

\singlespacing

\begin{abstract}
We introduce a framework for comparing the privacy of different mechanisms. A mechanism designer employs a dynamic protocol to elicit agents' private information. Protocols produce a set of \emph{contextual privacy violations}---information learned about agents that may be superfluous given the context. A protocol is \emph{maximally contextually private} if there is no protocol that produces a proper subset of the violations it produces, while still implementing the choice rule. Contextual privacy violations arise when a choice rule makes some agents collectively, but not individually, pivotal. In auctions, designing for contextual privacy requires choosing an initial question posed to each agent and the order in which agents are queried. We study a particular maximally contextually private protocol for $k$-item Vickrey auctions---the \textit{ascending-join protocol}---and show that it achieves maximal contextual privacy by delaying queries to bidders whose privacy it protects.
\end{abstract}

\section{Introduction}\label{sec:intro}

\onehalfspacing

In standard mechanism design, a designer elicits reports of agents' private information to determine the outcome of a social choice rule. Ex-post, in a direct revelation mechanism, the designer learns \emph{all} of agents' private information---typically more than is necessary for computing the rule. For example, in a sealed-bid second-price auction, the designer observes every submitted bid, even though to determine the outcome it suffices to know only who has the highest bid and the value of the second-highest bid. The exact values of losing bids are superfluous, and even the winner’s bid need not be known precisely.

Moreover, there are several reasons why it may be preferable for the designer not to learn \enquote{too much} about agents. First, gaining excess knowledge of agents' private information could expose the designer to legal liabilities or political risk. For example, as recounted in \citet{mcmillan1994selling}, a second-price auction for spectrum licenses in New Zealand had a \enquote{political defect}: \enquote{by revealing the high bidder's willingness to pay, the auction exposed the government to criticism} \citep{mcmillan1994selling}. The government would have benefited from a design that ensured it only learned what was strictly needed---with the idea that, as the adage goes, what they don't know can't hurt them. Second, if agents have privacy concerns, they may be reluctant to reveal their private information even if the appropriate allocative incentives are in place. Agents may worry that their private information could be used against them in subsequent interactions with the designer or third parties \citep{rothkopf1990vickrey, cramton1998ascending, ausubel2004efficient}.

In this paper, we study how mechanism designers can limit the superfluous information they learn. In our setup, when a designer commits to a social choice rule, they also choose a dynamic protocol for eliciting agents' information (or \enquote{types}). These dynamic protocols allow the designer to learn agents' private information gradually, ruling out possible type profiles until they know enough to compute the outcome of the rule. The key idea we introduce is a \emph{contextual privacy violation}. A protocol produces a contextual privacy violation for a particular agent at a particular type profile if the designer learns a piece of the agent's private information that is superfluous, i.e. it is not necessary for computing the outcome. We study protocols that are on the frontier of contextual privacy and implementation---that is, we study those protocols that produce an inclusion-minimal set of privacy violations subject to computing a choice rule. For some choice rules, it is possible to find protocols that produce no violations for any agent at any type profile---we call these protocols \emph{fully contextually private}.

Our set up describes extensive-form games in which the designer cannot use a trusted mediator or any mediating technology like anonymization or other cryptographic tools.\footnote{Advanced cryptography is often costly in terms of time, money or computational power. Even if available, some sophisticated solutions may be wasteful---in one of the earliest large-scale uses of secure multi-party computation, a double auction with sugar beet farmers in Denmark, designers wondered \enquote{if the full power of multiparty computation was actually needed,} or if a simpler implementation guided by a weaker privacy criterion may have sufficed \citep{bogetoft2009secure}.}  We start here not only for theoretical reasons---it is a minimal assumption to make about the environment\footnote{In our setup, the designer learns information if and only if the agent discloses it. Mediating technologies represent different commitments to forgetting parts of what the agent says. Anonymization allows the designer to commit to forgetting sender identifiers. From the participants' perspective, a trusted third party relaying only an auction's winner and price represents a commitment that all irrelevant information will be forgotten. Studying minimal assumptions---no commitment---can explain the historical prevalence of dynamic auction protocols (compare chapter 7 in \citet{Learmount1985history}) before cryptographic technologies were available, and can inform the design of auctions in futures where they might not be available anymore, e.g., due to quantum computing.}---but also because many real-world auctions operate dynamically, often with no mediating technology. Live ascending formats remain standard in the markets for fine art, wine, and antiques ({e.g.}, at Sotheby's); descending-price protocols, also live, are used in the cut-flower auctions ({e.g.}, at Aalsmeer market) and several European fish markets \citep{fluvia2012buyer}; the U.S. Forest Service sells timber in oral ascending auctions; and multi-round clock auctions have become essential for radio-spectrum allocation, natural-gas capacity sales, and wholesale-electricity procurements. In all of these settings, there are reasons why the auctioneer might want to limit what they learn about bidders types---post-auction price discrimination, distortions of subsequent competition due to revealed strategic information, and reduced participation by bidders unwilling to disclose proprietary data.\footnote{\citet{cramton1998ascending} and \citet{ausubel2004efficient}, for example, discuss such considerations in the context of clock auctions specifically. \citet{rothkopf1990vickrey} discusses the value of winner privacy in settings with post-auction negotiations, such as auctions for utilities, oil- and coal-leases, and construction contracts.} In the case of the 2017 Federal Communications Incentive Auction, for example, privacy guarantees were seen as valuable since they ``may alleviate winners’ concerns about misuse of the revealed information\dots and allow the auctioneer to conceal winners’ politically sensitive windfalls" \citep{milgrom2020clock}.

The rest of the paper is structured as follows. We discuss related literature in \Cref{sec:lit}. In \Cref{sec:model}, we articulate our formal framework, present the key definitions of the paper, and compare our approach to alternatives. We first define the dynamic messaging game played by agents. Then, we introduce the central definitions: contextual privacy violations, the contextual privacy order (an inclusion order based on violations), and maximally contextually private protocols (maximal elements in the inclusion order). We then further situate contextual privacy, articulating the real-world concerns it may capture, and discussing how it complements existing analyses of information disclosure in mechanism design. Finally we make a methodological observation that simplifies our analysis---\Cref{prop:partitional} shows that our analyses can proceed entirely in terms of the partitions of the type space. 

In \Cref{sec:generic}, we study how specific properties of choice rules lead to contextual privacy violations. In our first result, \Cref{thm:characterization}, we provide a characterization of choice rules that fully avoid contextual privacy violations. The characterization shows a connection between contextual privacy and pivotality---if some group of agents are jointly pivotal but no agent is individually pivotal on a subset of the type space, there must be a contextual privacy violation, and the designer must decide whose privacy to protect. Generally, this characterization shows that few choice rules avoid contextual privacy violations entirely. We use the characterization to show that the first-price auction rule (\Cref{prop:fbacp}) and the serial dictatorship choice rule (\Cref{prop:sd_cp}) admit protocols that produce no contextual privacy violations. Two necessary conditions for contextual privacy, \Cref{cor:collectindividual} and \Cref{cor:corners}, are direct corollaries of the characterization. We use these necessary conditions to rule out the existence of fully contextually private auctions for $k$-item Vickrey auctions (\Cref{prop:spanoncp}) and any stable choice rule for the school choice problem (\Cref{thm:stabsecp}).\footnote{We use the necessary conditions in \Cref{sec:full_cpimpossibilities} to show further impossibilities---we rule out the existence of fully contextually private efficient allocations in housing assignment and generalized median voting rules.}
 
In \Cref{sec:maxcp}, we take the perspective of a privacy-conscious designer who wants to implement a choice rule through a maximally contextually private protocol. First we show that for a class of social choice rules on totally ordered type spaces that contains $k$-item Vickrey auctions, it is without loss only to consider protocols which consist of threshold queries that are monotonically increasing or decreasing in the threshold after an initial guess (\Cref{thm:bimon}). This result delivers both theoretical and practical insight. On the theoretical side, it serves as a valuable reduction that allows us to prove that certain protocols are maximally contextually private. On the practical side, it highlights that there are two key dimensions of designing for contextual privacy: choosing what initial query to ask to each agent, and choosing the order in which to query agents. 

We then turn to the special case of ascending protocols for $k$-item Vickrey auction rules as an instructive and practically relevant example---we see how maximally contextually private protocols for the second-price auction rule choose a set of agents to protect, and delay asking questions to the protected agents. In \Cref{thm:maxcpkpa}, we show that an ascending-join protocol is maximally private. We show that it can be viewed as repeatedly asking agents whether they can rule out a particular outcome and formalize the sense in which it protects agents' privacy by delaying queries to them (\Cref{prop:delay}).\footnote{In \Cref{sec:overdesc}, we show an analogous result for the (over-)descending protocol for the $k$-item Vickrey auction \cite{overdescending}.}

We study variations of contextual privacy in \Cref{sec:extensions}. \textit{Group contextual privacy} requires that when the designer learns something about the entire type profile, it must make a difference to the outcome. We show that a protocol is group contextually private if every query ``rules out" an outcome (\Cref{prop:groupcp}). \textit{Individual contextual privacy} requires that if two type profiles that differ for one agent are distinguished, they need to lead to different allocations for this agent, presuming a private allocation domain. We show that the set of individual contextual privacy violations is the union of contextual privacy violations and nonbossiness violations (\Cref{lem:icp_nonbossy}). We conclude in \Cref{sec:conclusion}, highlighting several further avenues for future research. 

Proofs omitted from the main text can be found in \Cref{sec:proofs}. Additional impossibility results for full contextual privacy are in \Cref{sec:full_cpimpossibilities}. A maximally contextually private descending protocol for the $k$-item Vickrey auction is discussed in \Cref{sec:overdesc}. In \Cref{sec:informativeness}, We compare contextual privacy and relative informativeness.

\section{Related Literature}\label{sec:lit}

Two important antecedents to our notion of contextual privacy appear in the literature on decentralized computation: \emph{unconditional privacy} and \emph{perfect implementation}. Unconditional full privacy, introduced by \citet{chor1989zero} and further explored by \citet{brandt2005unconditional, brandt2008existence}, requires that no information beyond the outcome itself is revealed by a protocol's execution, independent of cryptographic or mediator assumptions. Within centralized mechanism design, \citet{milgrom2020clock}'s \textit{unconditional winner privacy} narrows this requirement, protecting solely the winner's valuation in an auction setting. While these concepts highlight critical privacy objectives, they are binary in nature---either fully met or not. Our framework introduces a richer comparative language: contextual privacy defines violations at the granular level of individual agents and type profiles, facilitating systematic privacy comparisons across mechanisms. Thus, our contribution generalizes unconditional winner privacy by allowing designers to articulate nuanced privacy objectives beyond protecting only the winner's information. 

The notion of \emph{perfect implementation}
\citep{izmalkov2005rational,izmalkov2011perfect} seeks implementations that do not rely on trusted mediators, but rather rely on technologies that enable verification of what was learned---like sealed envelopes and nested envelopes. By contrast, we consider privacy protection under more minimal assumptions about the technology available to the designer, and offer a privacy order rather than an implementation concept. Perfect implementation is one of many cryptographic approaches to auctions---for a survey of the literature on cryptographic auctions, see \citet{alvarez2020comprehensive}.

Several papers consider mechanism design through the lens of data minimization, studying how to reduce the ``volume" or ``amount" of information exchanged to achieve the designer's objectives. \citet{mackenzie2022menu} introduce dynamic \emph{menu mechanisms} and demonstrate, using their relative informativeness criterion, that such mechanisms typically disclose less information than direct-revelation mechanisms.\footnote{The connection between contextual privacy and relative informativeness is further discussed in \Cref{sec:informativeness}.} Relatedly, \citet{segal2007communication,segal2010communication} and \citet{gonczarowski2019} focus on the minimal communication complexity necessary for implementing specific social-choice rules. Other papers go beyond consideration of the ``volume" of information communicated, and into quantitative measures of \enquote{privacy loss} \citep{eilat2021bayesian, liu2020preserving}, where privacy loss is defined as some measure ({e.g.}, Shannon entropy, Kullback-Leibler divergence) of information revelation (comparing the designer's prior to their posterior).  Our approach differs from these by explicitly evaluating privacy at the level of individual agents and economic context: we classify disclosures according to their necessity for the implementation of the given social-choice rule, rather than solely by the total quantity of information revealed. In other words, these measure-based criteria treat all datum as equal, contextual privacy, treats information differently depending on how it is used. Relatedly,  \citet{StrackYang2024_Ecta_PrivacySignals} and \citet{HeSandomirskiyTamuz} study  information design under privacy constraints that also treat different pieces of information differently---requiring posterior beliefs about some “protected” coordinates (privacy sets in the former paper, other agents’ signals in the latter paper) to be unchanged after signal realization---but their constraints are not derived from a choice rule. In this respect, contextual privacy also differs from differential privacy \citep{dwork2006differential}. Where contextual privacy justifies information revelation through its relevance to a social outcome, differential privacy explicitly restricts the sensitivity of the social outcome to revealed information. For a survey of differential privacy's incorporation into mechanism design, see \citet{pai2013privacy}. 

Finally, our analysis relates to a literature in mechanism design concerned with settings where the designer may exploit the information elicited from participants. A few papers study auditability of mechanisms, asking how much of participants' information is needed to verify the outcome of a mechanism ex-post \citep{grigoryan2023theory,woodward2020self, pycia2024ordinal}. The notion of credibility \citep{akbarpour2020credible} requires that the designer not have any profitable deviation from the announced procedure, so that bidders can safely reveal whatever information is requested. Contextual privacy is motivated by the similar concerns---participants' reluctance to reveal sensitive data when they doubt the designer---but addresses a distinct question: conditional on faithful implementation, how much information beyond that needed for the rule is unnecessarily exposed?\footnote{Sometimes the desiderata happen to coincide, for example, a descending auction is simultaneously credible and fully (and thus maximally) contextually private. } 

\section{Model}\label{sec:model}
We now define the model. The environment consists of:
\begin{enumerate}[itemsep=0pt, parsep=0pt]
    \item A finite set of players, $N = \{0, 1, 2, \dots, n\}$. We call player $0$ the \emph{designer}, and players $1, 2, \dots, n$ \emph{agents}.
    \item A finite set of outcomes, $X$.
    \item A finite space of agent type profiles, $\bTheta = \bigtimes_{i=1}^n \Theta_i$.\footnote{We sometimes assume symmetric type spaces for ease of exposition, in which case we denote the individual type spaces $\Theta=\Theta_i$ for all $i$.} 
    \item Agent utilities $u_i \colon \Theta_i \times X \to \R_+$.
    \item A social choice function $\phi\colon \bTheta \to X$ that the designer would like to compute.
\end{enumerate}

In order to compute $\phi$, the designer elicits information with a \emph{mechanism}, which we model as an (perfect recall, finite-depth) extensive-form game. Formally, a mechanism is a 6-tuple $P = (H, i, (A_i)_{i=1, 2, \dots, n}, (\mathcal I_i)_{i=0, 1, \dots, n}, A, I)$. See \Cref{tab:notation} for an explanation of the different components of $P$. The mechanism terminates whenever the designer chooses an action and every path through the game tree ends with a designer action. The actions available to the designer are $x \in X$. We call the actions taken by the agents \emph{messages}, the action taken by the designer an \emph{allocation}. We assume individual private values, i.e. information sets of agents $i = 1, 2, \dots, n$ only depend on their type $\theta_i$ and the set of messages sent prior to history $v$. The information sets of the designer are given by all prior messages sent.

\begin{table}[!ht]\footnotesize
    \centering
        \caption{Notation for Dynamic Protocols.}
        \label{tab:notation}
    \begin{tabular}{ccc} 
    \toprule
    Name & Notation & Representative Element \\
    \midrule
    histories & $H$ & $h$ \\
    player called to play at $h$ & $i(h)$ & \\
    actions & $A_i$ & $a_i$ \\
    information sets of player $i$ & $\mathcal I_i$ & $I_i$ \\
    messages available at $I_i$ & $A(I_i)$ &\\
    information set at history $h$ & $I(h)$ & \\
    strategy of player $i$ & $\sigma_i(I_i)$ \\
    protocol & $(P, \bsigma)$ \\
    \bottomrule
    \end{tabular}
\end{table}

Players choose actions according to \emph{strategies}, which are deterministic mappings $\sigma_i \colon \mathcal I_i \to A_i$ such that $\sigma_i(I_i) \in A(I_i)$. We denote a \emph{strategy profile} by $\bsigma = (\sigma_i)_{i \in N}$. Call $P$ a \emph{mechanism}, and the tuple $(P, \bsigma)$ a \emph{protocol}. We assume that for all type profiles $\btheta$, the allocation chosen by the designer at the end of the protocol is $\phi(\btheta)$.

\subsection{Contextual Privacy of Protocols}\label{subsec:cpdef}

We say the designer \emph{distinguishes} two type profiles $\btheta, \btheta' \in \bTheta$ with protocol $(P, \bsigma)$ if the two type profiles reach different information sets for the designer at the end of the game.

The central definition of the paper is that of a \emph{contextual privacy violation}.
\begin{definition}
    Let $(P, \bsigma)$ be a protocol that computes a social choice function $\phi$. We say that agent $i$ has a \emph{contextual privacy violation} with respect to $(P, \bsigma)$ at $\btheta = (\theta_i, \btheta_{-i}) \in \bTheta$ if there is $\theta_{i}' \in \Theta_i$ such that $\btheta$ and $(\theta_i', \btheta_{-i})$ are distinguished but $\phi (\btheta) = \phi(\theta_i', \btheta_{-i})$. Otherwise, we say that $i$'s privacy is \emph{preserved} at $\btheta$. We denote the set of type profile-agent pairs at which contextual privacy violations of a protocol $(P, \bsigma)$ occur by $\Gamma( P, \bsigma ) \subseteq \bTheta \times N$.
\end{definition}
In other words, there is a contextual privacy violation for agent $i$ under protocol $P$ that computes choice rule $\phi$ at type profile $\btheta$ if the designer can tell apart types $\theta_i, \theta_i'$ but $\theta_i$ and $\theta_i'$ lead to the same outcome, holding fixed $\btheta_{-i}$. 

If $\Gamma(P, \bsigma) = \emptyset$, we call $(P, \bsigma) $ \emph{fully contextually private for $\phi$}---and if a choice rule $\phi$ can be implemented with a fully contextually private protocol, we say the choice rule $\phi$ is fully contextually private. If $\Gamma(P, \bsigma)$ is inclusion-minimal among all protocols $(P, \bsigma)$ that compute social choice rule $\phi$, we say that $(P, \bsigma)$ is a \emph{maximally contextually private} protocol for $\phi$. Two protocols $(P, \bsigma)$ and $(P', \bsigma')$  that both compute $\phi$ are  \emph{contextual privacy-equivalent} for $\phi$ if $\Gamma(P, \bsigma) = \Gamma(P', \bsigma')$. 

Asking whether a protocol has good contextual privacy properties and good incentive guarantees are mostly orthogonal questions. Once protocols with good contextual privacy properties are identified, their incentive properties can be analyzed on a case-by-case basis. We do this for the ascending-join auction, which we study in \Cref{sec:maxcp}. In particular, we will use the demanding notion of \emph{obvious dominance}. For a protocol $P$ and a designer strategy $\sigma_0$, we denote for agent $i$, a history $h \in H$, and an action $a \in A(h)$,  $X_{h, a, \sigma_0, \sigma_i}$ to be the set of all outcomes $x\in X$ reachable from history $h$ by any actions from agents $i' \in N \setminus \{0, i\}$ and following $\sigma_0$ and $\sigma_i$ at histories $h \in H$ where $i(h) \in \{0, i\}$. 
\begin{definition}
   For an agent $i$, a strategy $\sigma_i \colon H \to A$ is \emph{obviously dominant} for protocol $P$ if at all histories $h \in H$ at which agent $i$ is asked to play, $i(h) = i$, and for all actions $a' \in A(h) \setminus \{\sigma_i(I(h))\}$,
    \[
    \inf_{x \in X_{h, a, \sigma_0, \sigma_i}} u_{i} (x;\theta_i) \ge \sup_{x' \in X_{h, a', \sigma_0, \sigma_i}} u_{i} (x'; \theta_i),
    \]
We say that a protocol $(P, \bsigma)$ for $\phi$ \emph{implements} $\phi$ in \emph{obviously dominant strategies} if $\sigma_i$ is obviously dominant for all agents $i = 1, 2, \dots, n$.
\end{definition}

\subsection{Contextual Privacy of Iterative Partitions}

For our analysis of privacy properties, only the information about the type profile that the designer can infer from the realized play is relevant. Thus two protocols are indistinguishable for our purposes whenever they implement the same social choice function and induce the same information partition for the designer. This observation permits a substantial reduction of the set of protocols that must be considered. Instead of arbitrary extensive‑form mechanisms we can consider protocols induced by \emph{iterative partitions}, which sequentially refine the knowledge of the designer. Iterative partitions induce what we call \emph{partitional protocols}. Note that this reduction is somewhat analogous to the revelation principle: any protocol $(P, \bsigma)$ can be simulated by a partitional protocol that elicits exactly the information required to reach the same final partition. 

We first define an \emph{iterative partition} of the space of type profiles $\bTheta$. This iterative partition, $P = (V, E)$, is a directed rooted tree whose nodes are subsets of the type space, i.e. $V \subseteq 2^{\bTheta}$, and whose root node is the type space $\bTheta$. We will denote a representative node of $P$ by $\tilde \bTheta \subseteq \bTheta$. For every node $\tilde \bTheta$ in the iterative partition, there is an agent $i(\tilde \bTheta)$ such that the children of $\tilde \bTheta$ form a partition of $\tilde \bTheta$ based on type $\theta_i$.\footnote{Formally, for any children $\tilde \bTheta', \tilde \bTheta''$ of $\tilde \bTheta$, $\tilde \bTheta' \cap \tilde \bTheta'' = \emptyset$,  $\bigcup_{\tilde \bTheta' \in \children(\tilde \bTheta)} \tilde \bTheta'= \tilde \bTheta$ (the children form a partition of the parent), and $\tilde \bTheta' = \tilde \bTheta \cap (\tilde\Theta_{i}' \times \bTheta_{-i})$ for some $\tilde \Theta_{i}' \subseteq \Theta_i$ (when comparing the parent and the child, only agent $i$'s possible types change).} We say that an iterative partition \emph{computes} $\phi$ if for every leaf $\tilde \bTheta$ of $P$, there is an outcome $x \in X$ such that for all $\btheta \in \tilde \bTheta$,  $\phi(\btheta) = x$. We call the partition induced by leaves of the tree $P$ the \emph{final partition}. $P$ distinguishes $\btheta, \btheta' \in \bTheta$ if $\btheta, \btheta'$ lie in different cells of the final partition.

An iterative partition $P$ and a social choice function $\phi$ together define a \emph{partitional protocol} $((H, i, (A_i)_{i=1, 2, \dots, n},A, (\mathcal I_i)_{i=0, 1, \dots, n}, I), \bsigma)$ for $\phi$ as follows:
\begin{itemize}[itemsep=0pt]
    \item Histories contain agent types and partition sets, i.e. $H = \{ (\btheta, \tilde \bTheta) : \btheta \in \bTheta, \tilde \bTheta \subseteq \bTheta \}$.
    \item The agent called to play at history $h = (\btheta, \tilde \bTheta)$ is the agent whose type the partition changes along from $\tilde \bTheta$ to $\children (\tilde \bTheta)$.
    \item Actions for agent $i$ are subsets of their type space, $A_i = 2^{\Theta_i}$. The available actions to the agent called to play are the $\tilde \Theta'_i$ for $\tilde \bTheta' \in \children(\tilde \bTheta)$.
    \item The designer's information set at history $h = (\btheta, \tilde \bTheta)$ is $I(h) = \tilde \bTheta$. Any information set of agent $i$ at history $h = (\btheta, \tilde \bTheta)$ contains at least $\theta_i$, and in \Cref{prop:fbacp} and \Cref{sec:maxcp}, we will assume it is $(\theta_i, \tilde \bTheta)$.\footnote{The agent information beyond knowing their own private information is inconsequential for many of our results, except those pertaining to incentives.}
    \item The strategies of the induced protocol are as follows: The designer's strategy is $\sigma_0 (\tilde \bTheta) = x$ for $\{x \} = \phi(\tilde \bTheta)$ for any leaf $\tilde \bTheta$. When agent $i = 1, 2, \dots, n$ is called to play at $ h = (\btheta, \tilde \bTheta)$, their strategy is $\sigma_i(I(h)) = \tilde \Theta_{i}'$ with $\theta_i \in \tilde \Theta_{i}'$, $\tilde \bTheta' \in \children(\tilde \bTheta)$.
\end{itemize}
The rooted tree $(V,E)$ induces a natural precedence order on nodes: for $v,v'\in V$ we write $v \prec v'$ if $v$ is a strict ancestor of $v'$ along the unique path from the root. Whenever we refer to the ``earliest'' node or to a node that is ``minimal in precedence order,'' we mean minimal with respect to this partial order.

We say that protocols $(P, \bsigma)$ and $(P', \bsigma')$ are \emph{equivalent} if they compute the same social choice function and $(P, \bsigma)$ distinguishes $\btheta$ from $\btheta'$ if and only if $(P', \bsigma')$ distinguishes $\btheta$ from $\btheta'$. 

\begin{proposition}
\label{prop:partitional}
    For every protocol $(P, \bsigma)$, there exists an iterative partition whose induced partitional protocol $(P', \bsigma')$ is equivalent to $(P, \bsigma)$.
\end{proposition}
\begin{proof}
Let $(P, \bsigma)$ be a protocol, with $P = (H, i, (A_i)_{i=1, 2, \dots, n},A, (\mathcal I_i)_{i=0, 1, \dots, n}, I)$. 

We construct a partition whose partitional protocol is equivalent to $(P, \bsigma)$. It is without loss to consider a minimal set of histories $\tilde H$, which can be reached from some type profile $\btheta \in \bTheta$, under play according to $\bsigma$, for $P$. As $\bsigma$ consists of deterministic strategies, $\tilde H$ can be encoded by $(\btheta, \bTheta_{\tilde h})_{\tilde h \in \tilde H}$, where $\bTheta_{\tilde h} \subseteq \bTheta$ is the set of type profiles consistent with $\bsigma$ and the play at histories $\tilde h' \prec \tilde h$ preceding $\tilde h$. As agent information sets only depend on their own type and the history of play, $\bTheta_{\tilde h}$ refines only a single agent's type along histories $\tilde h$. Hence, the rooted tree of histories $(\bTheta_{\tilde h})_{\tilde h \in \tilde H}$ where edges are between histories directly preceding each other, forms an iterative partition. As $(P, \bsigma)$ is a protocol for $\phi$, the rooted tree of histories is also an iterative partition \emph{for $\phi$}. To see this, consider a leaf $\tilde \bTheta$ of this iterative partition. This leaf corresponds to an information set $I(\tilde \bTheta)$ of the designer, and results in outcome $\sigma_0(I(\tilde \bTheta))$. As $(P, \bsigma)$ is a protocol for $\phi$, it must be that all type profiles $\btheta \in \bTheta$ that lead to information set $I(z)$ have $\phi (\btheta) =\sigma_0(I(z))$, showing that the iterative partition $P$ induces a partitional protocol \emph{for $\phi$}. Given this representation of histories, the agents' actions can be relabeled as well. Consider a history $h = (\btheta, \bTheta_h)$ where agent $i(h)$ is called to play. Recall that agent information sets only depend on the agent's type in addition to the history of messages sent, so we can write $I(\btheta, \bTheta_h) = I(\theta_i, \bTheta_h)$. Hence, the agent can, instead of sending message $a = \sigma_{i(h)} (I(\btheta, \bTheta_h))$ from $A(h)$ send the subset of types $\theta_i$ which would have led to $a$. Thus,  agent actions can be encoded as subsets of the agent's type space, $2^{\Theta_i}$, leading to an equivalent protocol.
\end{proof}

The following observation allows us, for the purposes of privacy, to consider only iterative partitions. It relates the contextual privacy violations of partitional protocols to those of the underlying iterative partition.

\begin{lemma}
\label{lem:partitionalcpviolations}
    Let $(P, \bsigma)$ be induced by an iterative partition $P$ computing $\phi$.
    \[
    \Gamma(P, \bsigma) = \{ (\btheta, i) \mid \exists\, \theta_i' : P \text{ distinguishes } \btheta \text{ and } (\theta_i', \btheta_{-i})\text{, and }\phi(\btheta) = \phi(\theta_i', \btheta_{-i})\}.
    \]
\end{lemma}

If the context is clear, we will abuse notation and terminology for ease of exposition. We will call an iterative partition $P$ for $\phi$ a \emph{protocol}, and write $\Gamma(P,\phi)$ to denote the contextual privacy violations of the protocol induced by $P$ and $\phi$. 

In light of \Cref{prop:partitional} and \Cref{lem:partitionalcpviolations}, there is a contextually private protocol for $\phi$ if and only if there is an iterative partition $P$ for $\phi$ such that $\Gamma(P, \phi) = \emptyset$. As another consequence, for any protocol for $\phi$ that is maximally contextually private, there a protocol induced by an iterative partition $P$ for $\phi$ such that $\Gamma(P, \phi)$ is inclusion-minimal within $\{ \Gamma(P', \phi) \mid P' \text{ is an iterative partition for }\phi\}$.

\subsection{Properties of Contextual Privacy}\label{sec:whycp}
 
In this subsection we explain the main modeling choices behind our definition of a contextual privacy violation. For clarity of exposition, we restrict attention to iterative partitions $P$, and we will often use the notation $\Gamma(P,\phi)$ for the set of contextual privacy violations induced by protocol $P$ and choice rule $\phi$. We proceed by unpacking the definition of a contextual privacy violation: it conditions on a social choice function $\phi$, on a type profile $\btheta$, and on a particular agent $i$. Along the way, we will mention other notions of privacy---group contextual privacy and relative informativeness---which are discussed in more detail in \Cref{sec:extensions} and \Cref{sec:informativeness}, respectively.

\emph{Dependence on choice rule $\phi$.} The defining feature of contextual privacy violations is their dependence on the choice rule $\phi$---this notion of a privacy violation is \textit{contextual} in the sense that it depends on what the designer is trying to do with the information they gather. Formally, we only count a distinction between $\theta_i$ and $\theta_i'$ as a violation at $\btheta$ if $\phi(\theta_i, \btheta_{-i}) = \phi(\theta_i', \btheta_{-i})$. 

If we dropped the condition $\phi(\theta_i,\btheta_{-i})=\phi(\theta'_i,\btheta_{-i})$, the only means to compare protocols (and their resulting partitions) is distinction. This leads to the privacy partial order defined on partitions of the type space known as \emph{relative informativeness} (see  \citet{segal2007communication} and \citet{mackenzie2022menu}).

\begin{definition}
    Let $(P, \bsigma)$, $(P', \bsigma')$ be iterative partitions. We say that $(P, \bsigma)$ is less \emph{relatively informative} if for all $\btheta, \btheta' \in \bTheta$:
    \[
    (P, \bsigma) \text{ distinguishes } \btheta, \btheta' \Rightarrow (P', \bsigma') \text{ distinguishes } \btheta, \btheta'.
    \]
\end{definition}
Relative informativeness compares the total information revealed through two protocols, treating all disclosures as equally undesirable. However, in mechanism design, the disclosure of information is instrumental: the designer \emph{must} distinguish type profiles that map to different outcomes to implement $\phi$. A privacy metric that penalizes the designer for learning necessary information may not capture the range of concerns that arise in economic analysis. 

Contextual privacy acknowledges the contextual nature of privacy loss, i.e. the importance of the social choice rule. It imposes no penalty for learning information that is pivotal to the allocation. This captures the idea that privacy loss occurs only when the mechanism reveals information that is \emph{superfluous} to the economic context. This aligns with practical concerns, such as in the 2017 FCC Incentive Auction, where \enquote{unconditional winner privacy} was desired specifically to hide the winner's surplus (the difference between the highest bid and the price), while the information that determined the price was necessary for making the allocation, and thus not a privacy concern \citep{milgrom2020clock}.\footnote{Related contextual concerns arise in practice for loser privacy as well. For instance, a lawsuit brought by the Texas Attorney General against Google Inc.\ alleged that the company stored losing bids in second-price advertising auctions and used this information against advertisers in future interactions \citep{texasvgoogle2022}.}

In capturing superfluous information, contextual privacy is at the level of the agent and state pair: it detects whether a violation occurs for agent $i$ at profile $\btheta$ but not how much excess information is revealed. If at a particular type profile $\btheta$ a protocol $(P, \bsigma)$ reveals more information about agent $i$ than is needed, the notion does not differentiate \textit{how much} superfluous information is revealed. That is, if the protocol distinguishes $\theta_i$ from $\theta_i'$ and they lead to the same outcome holding fixed $\btheta_{-i}$, there is a violation; distinguishing $\theta_i$ from yet another superfluous type $\theta_i''$ (holding fixed $\btheta_{-i}$) does not register as another violation. 

Normatively, this reflects the idea that what matters for agent $i$ in the realized state of the world $\btheta$ is \textit{whether} the mechanism exposes them to \emph{any} unnecessary disclosure relative to the choice rule; once such a disclosure occurs, the core privacy concern for that agent–state pair is already present, and our framework does not, therefore, rank degrees of overexposure there.

For designers who wish to compare the magnitude of revelation among protocols that are contextually privacy-equivalent, \emph{relative informativeness} serves as a natural refinement. We give an example in \Cref{sec:informativeness} showing how relative informativeness and contextual privacy are complementary---relative informativeness can further select among contextual-privacy equivalent protocols. Given the complementary insights that relative informativeness delivers, we also characterize minimally relatively informative protocols (\Cref{prop:relativeinformativeness}) and show that a maximally contextually private protocol of interest in this article, the ascending-join protocol, is also minimally relatively informative (\Cref{prop:informativeness_ascj}). 

\emph{Dependence on type profile $\btheta$.} Second, the definition of a contextual privacy violation is conditioned on a particular type profile: there is a violation \emph{at type profile $\btheta$} if there is an agent whose type is distinguished from another type holding all others fixed, even though the outcome of the choice rule is the same at both types. This state‑by‑state perspective allows us to capture ``local'' improvements in privacy: we can modify a protocol to protect a specific agent at a specific type profile without needing to weigh this protection against potential losses elsewhere in the type space, provided the set of violations shrinks. For example, if two protocols $P$ and $P'$ compute the same social choice function but $P'$ eliminates violations on a subset of type profiles $\btheta$, while producing the same violations elsewhere, then $\Gamma(P',\phi)\subseteq\Gamma(P,\phi)$ and $P'$ is more contextually private. Contrast this with a more ``ex-ante" notion of contextual privacy which might state that protocol $P$ produces a violation if \textit{there exists} a type profile $\btheta$ at which superfluous information is revealed. This would yield a notion that could be well captured by a scalar measure of privacy loss. Others are based on Shannon entropy \citep{eilat2021bayesian} or Kullback-Leibler divergence \citep{liu2020preserving}.  Scalar measures necessarily aggregate privacy losses across states (often using a prior). Consequently, minimizing a scalar measure implies a willingness to trade off a privacy violation at state $\btheta$ against a violation at state $\btheta'$. By avoiding aggregation, our definition induces a dominance relation over protocols: conditional on implementing the same social choice rule, a protocol is maximally contextually private if its set of violations is inclusion-minimal—equivalently, if it lies on the Pareto frontier of implementation and contextual privacy, in the sense that no other implementing protocol weakly reduces every violation and strictly reduces at least one.

\emph{Dependence on agent $i$.} Finally, the definition of a contextual privacy violation assigns violations to \emph{individual agents}. There is a contextual privacy violation for agent $i$ at $\btheta$ if the designer can distinguish $\theta_i$ from some $\theta_i'$ at $\btheta$, holding $\btheta_{-i}$ fixed. Thus $\Gamma(P,\phi)$ keeps track of how much information the protocol elicits \emph{about each agent}. An alternative approach would be to define \emph{group contextual privacy}, where a violation occurs if the designer learns \emph{any} property of the joint type profile that is irrelevant to the outcome.

\begin{definition}
    Let $P$ and $P'$ be iterative partitions for $\phi$. We say that $P$ is more \emph{group contextually private} than $P'$ if
    \begin{multline*}
     \{ \btheta \mid \exists\, \btheta' \colon P \text{ distinguishes } \btheta \text{ and } \btheta' \text{, and }\phi(\btheta) = \phi(\btheta')\} \\
     \subseteq \{ \btheta \mid \exists\, \btheta' \colon P' \text{ distinguishes } \btheta \text{ and } \btheta' \text{, and }\phi(\btheta) = \phi(\btheta')\}.
    \end{multline*}
    We say that $P$ is \emph{fully group contextually private} if $P$ produces no group contextual privacy violations.
\end{definition}

In \Cref{subsec:groupcp} we show that full group contextual privacy is extremely demanding: mechanisms that are fully group contextually private must, in effect, rule out sets of possible outcomes whenever any agent interacts with them (\Cref{prop:groupcp}). This stringency makes it difficult to obtain a rich comparative order: in our iterative-partition formalism, a departure from group contextual privacy occurs exactly at nodes whose queries fail to shrink the associated outcome set, and each such node generates violations for all type profiles that reach it. As a result, improving in the group contextual privacy order means restructuring the entire protocol tree, rather than making state- or agent-specific privacy improvements. Consequently, the induced partial order is very coarse: beyond separating mechanisms that satisfy group contextual privacy at every node from those that do not, it offers little additional structure for comparing mechanisms. For our purposes, it is thus largely sufficient to characterize the fully group contextually private mechanisms, which is exactly what \Cref{prop:groupcp} does.

Furthermore, normative accounts of privacy often center on information about individuals rather than about groups. For example, legal scholar Helen Nissenbaum’s influential theory of \emph{contextual integrity}, which served as loose inspiration for the notion we develop here, describes privacy as a right concerning information about individual persons, and \enquote{does not take up questions about the privacy of groups or institutions} \cite[p.\ 123f.]{nissenbaum2004privacy}. For both theoretical and normative reasons, we therefore adopt an individual-level notion and return to group contextual privacy only as an extension in \Cref{sec:extensions}.

\section{Contextual Privacy and Pivotality} \label{sec:generic}
We first study the social choice rules that meet the strict requirement of \textit{full contextual privacy}. Doing so not only shows us that some well-known choice rules (first-price auction rules, serial dictatorships), admit fully contextually privacy protocols, but also uncovers a relationship between privacy and different forms of pivotality. 

For this analysis, we define individual and collective pivotality. On a product set $\hat \bTheta  = \bigtimes_{i=1}^n \hat \Theta_i \subseteq \bTheta$, an agent $i$ is \emph{individually pivotal} for $\phi$ if there exist $ \emptyset \subset \tilde \Theta_i \subset \hat\Theta_i$ such that for all type profiles $(\theta_i, \btheta_{-i}) \in \tilde \Theta_i\times \hat \bTheta_{-i}$ and $(\theta_i', \btheta_{-i}) \in (\hat \Theta_i \setminus \tilde \Theta_i )\times \hat \bTheta_{-i}$, the outcomes under $\phi$ differ, i.e. 
\[
\phi(\theta_i, \btheta_{-i}) \neq \phi(\theta_i', \btheta_{-i}).
\]
We say that agents $i = 1, 2, \dots, n$ are \emph{collectively pivotal} for $\phi$ on $\hat \bTheta$ if there are $\btheta, \btheta' \in \hat \bTheta$ such that
\[
\phi(\btheta) \neq \phi(\btheta').
\]
Individual pivotality aligns with our usual use of the term in mechanism design and social choice---it captures whether, for a particular subset of possible agent types, changes in the agent's type alter the outcome. Collective pivotality captures whether some group's joint type-profile influence the outcome, holding others not in the group fixed. These notions help us characterize, in the next result, choice rules in which contextual privacy violations arise.
\begin{theorem}
\label{thm:characterization}
A social choice function $\phi \colon \bTheta \to X$ admits a fully contextually private protocol if and only if for all product sets $\hat\bTheta=\bigtimes_{i=1}^n \hat\Theta_i$ on which the agents are collectively pivotal, at least one agent is individually pivotal.
\end{theorem}
\begin{proof}
We first show (in the contrapositive) that a choice rule being fully contextually private implies that no such set $\hat{\bTheta}$ exists. Assume that there is a $\hat\bTheta$ on which the agents are collectively pivotal and no agent is individually pivotal.  Let $P$ be any iterative partition for $\phi$. By collective pivotality, $\phi|_{\hat{\boldsymbol\Theta}}$ is non-constant, and so the set of nodes of $P$ that distinguish two type profiles in $\hat{\boldsymbol{\Theta}}$ is non-empty. Let $v$ be any earliest (i.e. minimal in precedence order) node that distinguishes two type profiles in $\hat{\boldsymbol{\Theta}}$. As no agent is individually pivotal, there exist $\btheta =(\theta_i, \btheta_{-i})$ and $\btheta' = (\theta_i', \btheta_{-i})$ that lie in different children of $v$, yet $\phi(\boldsymbol\theta) = \phi(\boldsymbol \theta')$. These constitute a contextual privacy violation for one of the type profiles $\btheta \in \hat \bTheta$.

For the converse direction, assume that no such $\hat{\bTheta}$ exists. We construct a contextually private protocol inductively. For the base case, we consider $\bTheta$. For the inductive step, we consider an arbitrary node  $\tilde{\bTheta}$. Either, $\phi|_{\tilde{\bTheta}}$ is constant, and the protocol can terminate and compute $\phi$, or not. If $\phi|_{\tilde{\bTheta}}$ is not constant, then there is collective pivotality, and, by assumption, there must be an agent who is individually pivotal, witnessed by $\breve\Theta_i \subseteq \tilde\Theta_i$. We can partition $\tilde{\bTheta}$ into two children $\tilde\bTheta' = \breve \Theta_i \times \tilde\bTheta_{-i}$ and $\tilde\bTheta'' = (\tilde\Theta_i \setminus \breve \Theta_i) \times \tilde\bTheta_{-i}$. By individual pivotality, type profiles distinguished at $v$ lead to different outcomes, and hence, distinguishing these does not introduce contextual privacy violations. By induction, no type profiles that lead to the same outcome are distinguished, yielding $\Gamma (P, \phi) = \emptyset$, and a contextually private iterative partition. 

It remains to show that this induction terminates. As type spaces are finite, there is a finite number of type profiles. Since the cardinality of children $\hat\bTheta', \hat\bTheta''$ is strictly smaller than the cardinality of parent $\hat\bTheta$, the size of $\hat\bTheta$ decreases strictly along paths in the partition, showing that the induction terminates after at most $\lvert \bTheta \rvert$ many steps.
\end{proof}
In general, this characterization suggests that many social choice rules will not be fully contextually private. After all, in many choice rules of interest, there are situations in which no single player’s report can move the outcome, but several reports together can. Intuitively, these are “threshold” or “aggregation” environments where the rule reacts only after enough agents change their reports. For assignment rules, these regions are ubiquitous: with discrete capacities and priority cutoffs, one player’s changed report often leaves all cutoffs unchanged, while a few coordinated changes can shift a cutoff, create or remove a blocking pair, or unlock a trading cycle, changing the allocation. Tie-breaking and priorities amplify this---outcomes are locally insensitive to single reports but jump discretely once enough reports move. We will see in a number of results derived from \Cref{thm:characterization}---\Cref{cor:collectindividual}, \Cref{cor:corners}, \Cref{prop:spanoncp}, \Cref{thm:stabsecp} and more in \Cref{sec:full_cpimpossibilities}---that contextual privacy violations are common in choice rules of interest. 

However, before shifting to the negative results, we show that \Cref{thm:characterization} also delivers some positive results: first-price auction rules and serial dictatorships are fully contextually private. 

    Throughout, when we refer to ``first-price auction rules," we are referring to social choice rules induced by a profile of strictly increasing bidding maps
$\mathbf{b}=(b_i)_{i=1}^n$ that translate types into bids. Fix $b_i:\Theta_i\to\R_+$ strictly increasing for each $i$. The induced social choice rule is
\[
\phi_{\FPA}^{\mathbf{b}}(\boldsymbol\theta)
=
\left(\min\argmax_{i = 1, 2, \dots, n} b_i(\theta_i), \max_{i = 1, 2, \dots, n} b_i(\theta_i)\right),
\]
i.e., the winner is the highest bidder (ties broken by smallest index) and the transfer equals the highest bid.\footnote{Note that our privacy analysis studies the computation of $\phi_{\FPA}^{\mathbf{b}}$ from types and is independent of equilibrium considerations, and in specific settings, particular $\mathbf{b}$ functions may be of particular interest (e.g. in the independent private values model, the well-known symmetric Bayes-Nash equilibrium provides a concrete $\mathbf{b}$ of interest).} Since our analysis relies only on order comparisons of $b_i(\theta_i)$, we may write $\phi_{\FPA}$ for brevity. We order the outcomes $(i, t)$ lexicographically in transfer $t$ and winner $i$.
\begin{proposition}
\label{prop:fbacp}
The first-price auction rule is fully contextually private.
\end{proposition}
\begin{proof}
    Let $\hat \bTheta = \bigtimes_{i=1}^n \hat\Theta_i \subseteq \bTheta$ be a product set on which agents are collectively pivotal. Denote agent $i$ the agents with the highest still possible type, i.e. agents of type $\max \hat \Theta_i$ (with tiebreaking by smallest index). We claim that agent $i$ is individually pivotal for the set $\tilde \Theta_i \coloneqq \{\max \hat \Theta_i \}$. (By collective pivotality, it must be that $\hat \Theta_i \setminus \tilde \Theta_i \neq \emptyset$.) In particular, for all type profiles where $\theta_i \in \tilde \Theta_i$ (that is, $\theta_i = \max \hat \Theta_i$) and any other partial type profiles $\btheta_{-i} \in \hat \bTheta_{-i}$, it must be that agent $i$ wins and pays price $b_i(\theta_i)$. For any type profile such that $\theta_i \in \hat \Theta_i \setminus \tilde \Theta_i$, it must be that $\theta_i < \max \hat \Theta_i$. For any $\btheta_{-i} \in \hat\bTheta_{-i}$, either agent $i$ loses or they pay a lower price, establishing individual pivotality.
\end{proof}
The feature of the first price auction that allows for perfect privacy protection is that, for \emph{any} product set, an individually pivotal agent can be declared: it is the agent that, if they had the highest possible type under the product set, they would win independent of all other types that would lie in the product set. By repeatedly singling out this agent, the dynamic protocol can always find an agent that will not produce a contextual privacy violation.\footnote{Others, using different methods of analysis, have pointed out that the descending auction implementation of the first-price auction has the beneficial property of preventing information leakage \citep{gretschko2014information, kleinberg2016descending}.}

A similar positive result holds for the serial dictatorship choice rule. Consider a finite set $C$ of objects. Outcomes are $X = \{ x \colon N \to C \text{ injective}\}$ and types are $\theta_i \in \R^{\lvert C\rvert}$, for some set of \emph{objects} $C$. We assume strict preferences, that is $\theta_c \neq \theta_{c'}$ for $c \neq c' \in C$. Utilities are $u_i(x ; \theta_i) = (\theta_i)_{x(i)}$.

To define the serial dictatorship, we define a lexicographic order $\preceq_{\operatorname{SD}, \btheta}$ on $X$ with tiebreaking $i = 1, 2, \dots, n$. (We will call a choice rule that is the serial dictatorship choice rule for some tiebreaking order \emph{a} serial dictatorship choice rule.) Define $x \preceq_{\operatorname{SD}, \btheta} x'$ for $x, x' \in X$ if and only if $u_i(x; \theta_i) \le u_i(x'; \theta_i)$ for $i = 1, 2, \dots, j$ and $u_{j+1}(x; \theta_{j+1}) < u_{j+1}(x'; \theta_{j+1})$ for some $j = 1, 2, \dots, n-1$. Then, the serial dictatorship choice rule for tiebreaking according to $i = 1, 2, \dots, n$ is
\[
\phi_{\operatorname{SD}}(\btheta) = \max_{\preceq_{\operatorname{SD}, \btheta}} x.
\]
We can similarly use \Cref{thm:characterization} to show full contextual privacy of serial dictatorship choice rules.
\begin{proposition}\label{prop:sd_cp}
    Serial dictatorship choice rules are fully contextually private.
\end{proposition}
\begin{proof}
    Let $\hat \bTheta = \bigtimes_{i=1}^n \hat\Theta_i \subseteq \bTheta$ be a product set on which agents are collectively pivotal. Denote agent $i$ the highest-priority agent that has at least two possible assignments, i.e. the highest-priority agent for which there exist $x, x' \in \phi_{\operatorname{SD}}(\hat\bTheta)$ with $x(i)\neq x'(i).$ Such an agent must exist by collective pivotality. By choice of $i$, for all $j<i$ we have $y(j)=y'(j)$ for all $y,y'\in\phi_{\operatorname{SD}}(\hat\bTheta)$.  
    
     Define $S_i = \{x(i) \colon x \in \phi_{\operatorname{SD}}(\hat\bTheta)\}$. Choose an object $c \in S_i$. We claim that agent $i$ is individually pivotal for the set $\tilde \Theta_i = \{ \theta_i \in \hat\Theta_i \mid \theta_i(c) > \theta_i (c'), \text{ for all } c' \in S_i \setminus \{c\}\}$. 

    For all type profiles where $\theta_i \in \tilde \Theta_i$ and any other partial type profiles $\btheta_{-i} \in \hat \bTheta_{-i}$, it must be that agent $i$ is assigned $c$. For any type profile such that $\theta_i \in \hat \Theta_i \setminus \tilde \Theta_i$, it must be that $i$ is not assigned to $c$.
\end{proof}
Note, this result---which we derive from our general characterization of fully contextually private choice rules to illuminate the connection between contextual privacy and pivotality---could alternatively be obtained as a corollary of an observation in Appendix B of \citet{mackenzie2022menu}. \citeauthor{mackenzie2022menu} show that the serial‐dictatorship menu mechanism is \emph{minimal in the relative informativeness order}: its observed play distinguishes types only up to the induced outcome. We examine further connections between contextual privacy and relative informativeness in depth in \Cref{sec:informativeness}.

There are two noteworthy necessary conditions which follow from \Cref{thm:characterization}. The first necessary condition, \Cref{cor:collectindividual}, corresponds to product sets $\hat\bTheta$ where agents have at most two possible types holding all other types fixed, $\lvert \hat \Theta_i \rvert \le 2$ for all $i \in N$. The second condition is more restrictive.

\begin{corollary}
 \label{cor:collectindividual}
Let $\phi$ be a social choice function, and consider a type profile $\btheta \in \bTheta$. If there exists a subset $A \subseteq N$ of agents, and types $\theta_i' \in \Theta_i$ for agents $i \in A$, 
\[
\phi(\btheta_A,\btheta_{-A}) \neq \phi(\btheta_{A}', \btheta_{-A}) \tag{collective pivotality}
\]
and for all $i \in A$
\[
\phi(\theta_i,\btheta_{-i}) = \phi(\theta_i', \btheta_{-i}) \tag{no individual pivotality}
\]
then for any iterative partition, there is an agent $i \in A$ whose contextual privacy is violated at $\btheta$, i.e. $A \times \{\btheta\} \cap \Gamma(P, \phi) \neq \emptyset.$
\end{corollary}

The next condition is a more restrictive function, which can be thought of as a ``minimal necessary condition" for full contextual privacy. An analogue of this condition appears in the earlier literature on unconditional privacy, where it is called the Corners Lemma \citep{chor1989zero}.

\begin{corollary}[Corners Lemma]\label{cor:corners}
Let $\phi$ be a social choice function, and consider a type profile $\btheta \in \bTheta$. If there are two agents $i, j \in N$ and types $\theta_i' \in \Theta_i$, $\theta_j' \in \Theta_j$ such that 
\[
\phi(\btheta) = \phi(\theta_i', \btheta_{-i}) = \phi(\theta_j', \btheta_{-j}) = x,\]
yet
\[\phi(\theta_i', \theta_j', \btheta_{-ij}) \neq x,\] 
then for any iterative partition $P$ for $\phi$, there must be a contextual privacy violation for agent $i$ or $j$ at $\btheta$, i.e. $\{i, j\} \times \{\btheta\} \cap \Gamma(P, \phi) \neq \emptyset.$
\end{corollary}

We illustrate the Corner's Lemma in \Cref{fig:corner}. If any three ``corners" in a two-by-two, then the fourth ``corner" must also lead to that outcome, otherwise there is a violation. 

\begin{figure}
    \centering
    \begin{tikzpicture}[scale=0.65]
  \fill[gray!20] (0,0) rectangle (2,2); 
  \fill[gray!20] (0,2) rectangle (2,4); 
  \fill[gray!20] (2,2) rectangle (4,4); 

  \draw (0,0) rectangle (4,4);
  \draw (2,0) -- (2,4);
  \draw (0,2) -- (4,2);

  \node at (1,3) {$x$};
  \node at (3,3) {$x$};
  \node at (1,1) {$x$};
  \node at (3,1) {$x'$};
    \end{tikzpicture}
    \caption{Illustration of the Corners Lemma (\Cref{cor:corners}).}
    \label{fig:corner}
\end{figure}

In general, many choice rules we care about do not have fully contextually private protocols---indeed even a minimal counterexample of the sort shown in \Cref{cor:corners} can be used to show that many choice rules are not fully contextually private. To see this, we consider two common choice rules---from an auction domain and one from a matching domain.

First we consider the ($k$-item) Vickrey auction rule for allocating $k$ items $k=1, \dots, n-1$, which we denote $\phik$. The associated choice rule can be represented as $\phik \colon \bTheta \to 2^N \times \Theta, \btheta \mapsto (W, t)$, where $t(\btheta) = \btheta_{[k+1]}$ is the $(k+1)$st highest bid, and $W(\btheta)$ is a set of $k$ winners minimal in the strong set order on $N$, i.e. tiebreaking in the order $i = 1, 2, \dots, n$.\footnote{Sets $W, W'$ of a partially ordered space are ordered in strong set order if $w \le w'$ for all $w \in W, w' \in W'$.} A special case is $\phispa$, the second-price auction rule with a single item. 

\begin{restatable}{proposition}{spanoncp}
\label{prop:spanoncp}
    Let $\Theta_i = \Theta$ for $i\in N$, $\lvert \Theta \rvert \ge 3$, and $k \le n-2$. Then, the Vickrey auctions $\phik$ induce contextual privacy violations.
\end{restatable}

 \begin{proof}
The proof constructs an instance of collective but not individual pivotality using \Cref{cor:corners}. Consider a type profile with $\theta_{[k]} > \theta_{[k+3]}$ (if $n=k+2$, then let $ \theta_{[k+3]}$ be strictly smaller than all $\theta \in \Theta$). 

Consider agents $i, j$ that have types in $[\underline{\theta}, \bar{\theta}]$ such that $\btheta_{[k+4]} < \underline{\theta}<\bar{\theta} < \btheta_{[k+1]}$. Consider the product set $\hat \bTheta \coloneqq \{\underline{\theta},\bar{\theta}\} \times \{\underline{\theta},\bar{\theta}\} \times \bigtimes_{\ell \in N \setminus \{i,j\}} \Theta_\ell$. This corresponds to a square depicted in \Cref{fig:square_spa}.

\begin{figure}[!ht]
    \centering
  \begin{tikzpicture}[scale=.65]

\draw (0,0) rectangle (4,4);
\draw (2,0) -- (2,4);   
\draw (0,2) -- (4,2);   

\foreach \y in {0.5,1.5,2.5,3.5}
  \draw[dashed,gray] (0,\y) -- (4,\y);

\fill (0.5,3.5) circle (1.5pt);
\draw[line width=0.8pt] (1.1,3.5) -- (1.7,3.5);
\draw[line width=0.8pt] (3.1,3.5) -- (3.7,3.5);

\fill (3.0,2.5) circle (1.5pt);

\fill (0.5,1.5) circle (1.5pt);

\draw[line width=0.8pt] (1.1,0.5) -- (1.7,0.5);
\fill (2.7,0.5) circle (1.5pt);
\draw[line width=0.8pt] (3.1,0.5) -- (3.7,0.5);

\node[left] at (-0.2,3.5) {$\bar{\theta}$};
\node[left] at (-0.2,2.5) {$\underline{\theta}$};
\node[left] at (-0.2,1.5) {$\bar{\theta}$};
\node[left] at (-0.2,0.5) {$\underline{\theta}$};

\begin{scope}[xshift=6cm]
  \fill[gray!20] (0,0) rectangle (2,2); 
  \fill[gray!20] (0,2) rectangle (2,4); 
  \fill[gray!20] (2,2) rectangle (4,4); 

  \draw (0,0) rectangle (4,4);
  \draw (2,0) -- (2,4);
  \draw (0,2) -- (4,2);

  \node at (1,3) {$x$};
  \node at (3,3) {$x$};
  \node at (1,1) {$x$};
  \node at (3,1) {$x'$};
\end{scope}

\end{tikzpicture}
    \caption{Showing collective but not individual pivotality to the second-price auction. Type profiles for agent $i$ and agent $j$ (left, agent $j$'s type is represented by a dot, and agent $i$'s type is represented by a dash); required outcome under the second-price auction rule $\phispa$. }
    \label{fig:square_spa}
\end{figure}

The set of winners under $\phik$ is the same in this set, but the price differs. Let $x$ be the outcome in which winners pay $t = \overline{\theta}$ and $x'$ be the outcome where they pay $t = \underline{\theta}$. The agents are collectively pivotal on $\hat \bTheta$. But they are not individually pivotal, as 
\[
\phik(\overline{\theta}, \overline{\theta}, \boldsymbol\theta_{-ij}) = \phik(\underline{\theta}, \bar{\theta}, \btheta_{-ij}) = \phik(\bar{\theta}, \underline{\theta}, \btheta_{-ij}) =x
\]
and
\[
\phik(\underline{\theta}, \underline{\theta}, \btheta_{-ij}) = x'.
\]
\end{proof}

A similar application of \Cref{cor:corners} rules out full contextual privacy of stable rules in two-sided matching contexts. We define a two-sided matching problem as follows. Every agent (also called \enquote{student}) is matched to at most one object (also called \enquote{school}), and at most $\kappa(c)$ agents are matched to an object $c$, for every $c \in C$, for some \emph{capacities} $\kappa(c)$. That is, the set of outcomes is
\begin{multline*}
X = \left\{ 
\mu \subseteq N \times C \colon\right. \left.\forall i \in N\colon  \lvert \{c \in C \mid (i, c) \in \mu \} \rvert \le 1  \text{ and }  \forall c \in C \lvert \{i \in N \colon (i, c) \in \mu\} \rvert \le \kappa(c)\right\}.
\end{multline*}
We say there is \emph{no oversupply} if the aggregate capacity equals the number of agents, $\sum_{c \in C} \kappa(c) = n$. 

We assume that objects' preferences over agents are determined by agents' \emph{priority scores}. We assume the scores for different objects are private information, and so an agents ``type" consists both of her score vectors at different objects and her preference ordering  over the objects. This matches the \emph{college assignment problem} with standardized test scores \citep{balinski1999tale}. We assume that each agent has a vector of \emph{scores} $(\score_c)_{c \in C}$, representing their score at each object $c \in C$. Objects prefer agents with higher scores. Agent $i$ has private information $\theta_i = (\preceq_i, \score_i)$, where $\preceq_i$ is $i$'s preference ranking over schools, and $\score_i \colon C \to \R$ maps objects to scores. 

In such school choice settings, a desirable property of choice rules is \emph{stability} (or  \emph{no justified envy}). A choice rule $\phi$ is \emph{stable} or \emph{induces no justified envy} if there is no blocking pair $(i,c)$, $i \in N$, $c \in C$ such that $c \succ_{i} \phi_i(\boldsymbol\theta)$ and $\score_{i}(c) >\score_{j} (c)$ for $\phi_j(\boldsymbol\theta) = c$.

\begin{restatable}{proposition}{stabsecp}
\label{thm:stabsecp}
    Assume $\lvert C\rvert\geq 2$ and $\lvert N\rvert \ge 2$ and that there is no oversupply. Then any iterative partition $P$ for a stable choice rule produces contextual privacy violations.
\end{restatable}

This proposition, which we again prove through the minimal counter-example delivered in \Cref{cor:corners}, highlights one way in which stability produces contextual privacy violations. Whether an agent has justified envy of another student is a collective feature of both agents' types. A single student's change in score may lead to a claim of another student to their seat in this school, yet no justified envy. A change in this other student's preference may lead to envy which is not justified. Yet both changes together lead to justified envy, and force collective pivotality in these changes in type. 

We can also see, for the school choice problem, how common protocols will produce contextual privacy violations. Consider for example the deferred acceptance protocol, which produces a stable outcome. The designer must learn information about tentative assignments that may not be final, and in so doing violates contextual privacy. 

In \Cref{sec:full_cpimpossibilities}, we show additionally that no individually rational and efficient house assignment rule is contextually private (\Cref{subsec:houseassignment}) and that no generalized median-voting rule is contextually private (\Cref{subsec:voting}). 

When a choice rule cannot be implemented by a fully contextually private protocol, the preceding results offer only limited design guidance. We now turn to the problem of finding privacy-conscious protocols when the social choice function is not fully contextually private.

\section{Maximal Contextual Privacy in Auctions}\label{sec:maxcp}
In the previous section, we studied whether contextual privacy violations are unavoidable. Here, we take the perspective of a privacy-conscious designer who must select a protocol for a fixed choice rule. If full contextual privacy of a protocol is unattainable---which is often the case (see \Cref{sec:full_cpimpossibilities})---such a designer must make choices about whose privacy to protect, and at what type profiles. They would like to find protocols that are maximally contextually private: protocols that implement the choice rule, and are maximal elements in the contextual privacy order, i.e. that induce a minimal set (in the inclusion sense) of privacy violations. 

Studying maximal contextual privacy in a high degree of generality is challenging. Fortunately, our results in \Cref{sec:generic} point to structure that can be imposed on the set of choice rules we study, simplifying our analysis. This structure will be most natural for one-dimensional type spaces, leading us to consider auction domains. In particular, we will find in \Cref{subsec:bimonotonicity} a sufficient condition related to pivotality that reduces the set of protocols that must be considered in the search for maximally contextually private ones.

Then, in \Cref{subsec:kpa}, we study maximally contextually private protocols for $k$-item Vickrey auction rules and provide intuition for how our method improves privacy in other market design settings.\footnote{While the main text focuses on the ascending-join protocol, we study in \Cref{sec:overdesc} an analogous maximally contextually private descending protocol, the overdescending-join protocol.} 

\subsection{Bimonotonicity and the Design Dimensions of Contextual Privacy}\label{subsec:bimonotonicity}

First, we show that it is without loss, for a certain class of choice rules, to restrict attention to partitions of a particular structure. In particular, we will be able to make this reduction for choice rules that satisfy \textit{interval pivotality}.  
\begin{definition}[Interval Pivotality]
    A social choice function $\phi$ defined on an ordered type space $\Theta$ exhibits \emph{interval pivotality} if for all $i \in N$, $\btheta_{-i} \in \bTheta_{-i}$, there are elements $\underline{\theta} \in \Theta$ and $\overline{\theta} \in \Theta$ such that for $\theta_i \neq \theta_i'$,
    \[
        \phi(\theta_i, \btheta_{-i}) = \phi(\theta_{i}', \btheta_{-i}) \iff (\theta_i, \theta_{i}' \le \underline{\theta} \text{ or }\theta_i, \theta_{i}' > \overline{\theta}).
    \]
\end{definition}
Many standard auction rules are interval pivotal. To see why, note that in auctions very high types cannot change the outcome (because they are winners) and very low types do not affect the outcome (because they are losers), intermediate types determine the price and hence the marginal function $\theta_i \mapsto \phi(\theta_i, \btheta_{-i})$ is injective. For instance, $\phi_{\FPA}$ is interval pivotal, with $\overline{\theta}(\btheta_{-i}) = \max \Theta_i$ and $\underline{\theta} (\btheta_{-i}) = \max \btheta_{-i}$.

Most relevant for this section, $k$-item Vickrey auction rules satisfy interval pivotality. We present a lemma proving as much here, both to spell out an example of interval pivotality, and because it will be used in \Cref{subsec:kpa}.

\begin{lemma}
\label{lem:kpaintervalpivotal}
    The $k$-item Vickrey auctions $\phik$, $k= 1, 2, \dots, n$ are interval pivotal. 
\end{lemma}
\begin{proof}
    Consider an agent $i \in N$ and $\btheta_{-i} \in \bTheta_{-i}$. We choose 
    \begin{align*}
        \overline{\theta} & \coloneqq (\btheta_{-i})_{[k]} &
        \underline{\theta} & \coloneqq (\btheta_{-i})_{[k+1]}. 
    \end{align*}    
    where $(\btheta_{-i})_{[l]} > \theta$ for $l > n-1$. We consider three cases. First, if $\theta_i >  (\btheta_{-i})_{[k]}$, then $i$ is a winner under $\btheta$, and their type does not change the outcome for any such type. If  $\theta_i \le  (\btheta_{-i})_{[k+1]}$, then agent $i$ is a loser that is not pivotal for the price, and hence no type of these changes the outcome. For types $\underline{\theta} < \theta_i \le \overline{\theta}$, the agent determines the price, and they are changing the outcome, implying interval pivotality.
\end{proof}

Recall that in \Cref{sec:generic}, we demonstrated an intimate connection between contextual privacy and pivotality. In this section, we find new connections between the two concepts. In particular, the pivotality patterns implied by interval pivotality allow us to reduce the space of protocols we consider in the search for maximally contextually private protocols. This sufficient class is \textit{bimonotonic} protocols. 

These protocols restrict the sequence of queries to the same agent. Some preliminary definitions are needed. We say that an iterative partition is \emph{monotonically increasing} for agent $i$ for all $\tilde \bTheta$ such that $i(\tilde \bTheta) = i$, the children of $\tilde \bTheta$ are $\{ \min \tilde \Theta_i\} \times \tilde\bTheta_{-i}$ and $(\min \tilde \Theta_i, \max \Theta_i]\times \bTheta_{-i}$. That is, the agent is repeatedly queried on whether their type is the lowest ``remaining" type. Similarly, we define an iterative partition to be \emph{monotonically decreasing} for agent $i$ if for every $\tilde \bTheta \subseteq \bTheta$ such that $i(\tilde \bTheta) = i$, the children of $\tilde \bTheta$ are $\{ \max \tilde \Theta_i\} \times \bTheta_{-i}$ and $[\min \Theta_i, \max \tilde{\Theta}_i) \times \bTheta_{-i}$.

We call an iterative partition \emph{bimonotonic} if the first query to each agent is a \emph{threshold query} separating $[\min \Theta_i, \tilde \theta]$ from $(\tilde \theta, \max \Theta_i]$ for some $\tilde \theta \in \Theta_i$, and monotonically increasing for agent $i$ on all $\tilde \bTheta \subseteq \bTheta$ that do not contain $\min \Theta_i$ and monotonically decreasing for agent $i$ on all $\tilde \bTheta \subseteq \bTheta$ that do not contain $\max \Theta_i$. Many auction protocols are bimonotonic, including the ascending protocol, the descending protocol, and the overdescending protocol which is discussed in \Cref{sec:overdesc}. There are also some auction protocols rarely seen in practice---which, for instance, begin in the middle of the type space, or at different thresholds for different agents---that are also bimonotonic.\footnote{There is evidence that more exotic bimonotonic protocols that combine descending and ascending sequences of queries were in use in the \nth{17} century in England \citep[p. 22]{Learmount1985history}.}

The following result states that any set of contextual privacy violations that can be achieved with a generic protocol $P$ can also be achieved using a bimonotonic protocol $P'$. The significance of this result, for design, is that a privacy-conscious designer need only consider the set of bimonotonic protocols. 
\begin{restatable}{proposition}{bimon}
\label{thm:bimon}
    Let $\phi$ be an interval pivotal social choice rule. For every iterative partition $P$ for $\phi$, there is a contextual privacy-equivalent bimonotonic iterative partition $P'$ for $\phi$, i.e. $\Gamma(P, \phi) = \Gamma(P', \phi)$.
\end{restatable}
Bimonotonicity offers insight into the structure of design choices when designing with contextual privacy in mind. In particular, the designer makes two choices that affect the set of privacy violations: (i) the initial threshold query to each agent, which we will sometimes call a ``guess," and (ii) the order in which agents are queried. 

Guessing by itself is quite powerful.
\begin{example}
To see this, consider the $k$-item Vickrey auctions $\phik$, $k = 1, 2, 3, \dots, n$, and any outcome $(W, t) \in X$. A partition $P$, which partitions in any order
\[
\begin{cases}
\{[\min \Theta_i , t), [t, \max \Theta_i] \} & i \in W\\
\{[\min \Theta_i, t], (t , \max \Theta_i]\}& i \in N \setminus W
\end{cases}
\]
will yield no privacy violation for any type profile $\btheta$ with $\phi(\btheta) = (W, t)$ and $\theta_i \neq t$ for all $i \in N$. This protocol \enquote{guesses} that the outcome is $(W,t)$ and asks agents whether they have information that rules out this outcome. For the $k$-item Vickrey auction, such guessing is sufficient to compute the choice rule.
\end{example}

That said, such protocols are of limited use in many settings: After all, mechanism design is relevant because the outcome is \emph{not} known.

\subsection{A Maximally Contextually Private Protocol}\label{subsec:kpa}
Careful construction of the order in which agents are queried---the second dimension of design---can lead to contextual privacy improvements that do not depend on knowledge of the type distribution.  We will exposit such improvement in the following example, which takes the ascending protocol as a starting point, and in \Cref{thm:maxcpkpa}. We show analogous improvement of the descending protocol in \Cref{sec:overdesc}.

\begin{example}[Privacy Improvement of the Ascending Protocol]

Consider implementing a $4$-bidder single-item, second-price auction with type space $\Theta=\{1, 2, \dots, 10\}$ with the ascending protocol\footnote{The ascending protocol is the bimonotonic protocol that has $\min \Theta_i$ as an initial guess for all agents and queries agents $i = 1, \dots, n$ sequentially for every type $\theta \in \Theta$.} for the type profile $\btheta=(4,3,8,2)$.

For this type profile, the winner is agent 3, who has value $\theta_3 = 8.$ First, consider the ascending protocol, depicted on the left of \Cref{fig:ascjoinvsasc}.

\begin{figure}[h!]
    \centering
    \includegraphics[scale=.17]{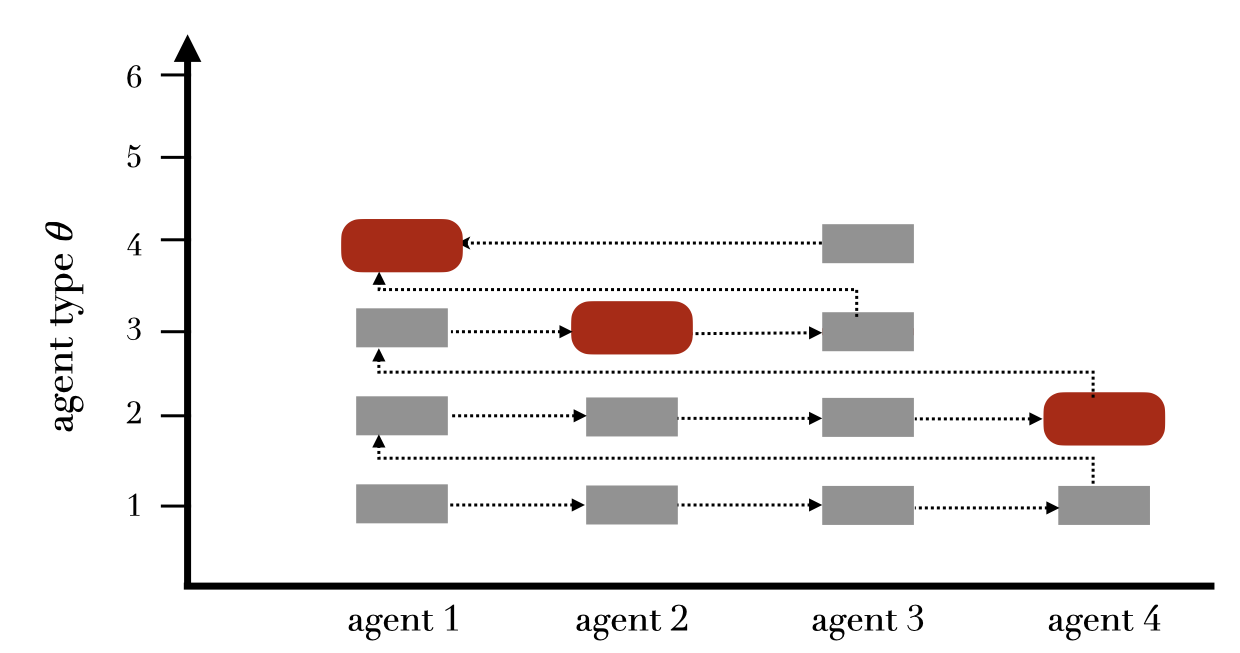}\hspace{1mm}\includegraphics[scale=.17]{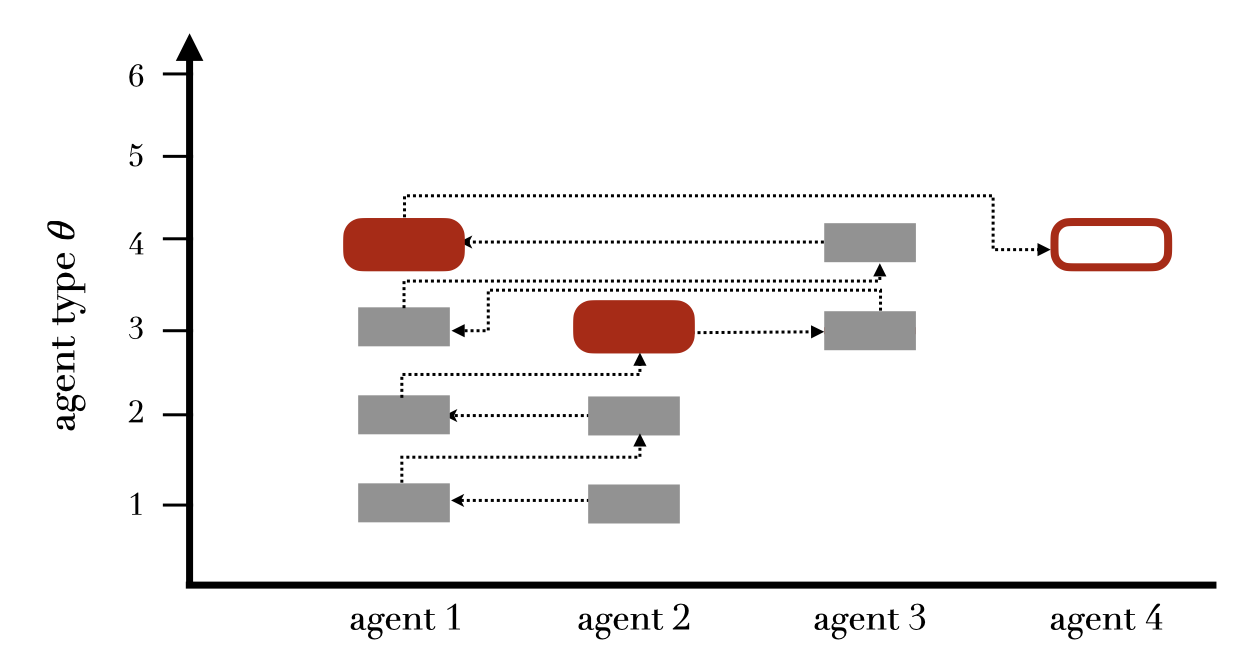} 
    \caption{Example---Ascending (left) vs. Ascending-Join (right) Protocol for $\btheta= (4,3,8,2)$. Queries to an agent are represented as rectangles, positioned at the corresponding threshold $\tilde\theta$ (gray solid rectangles represent affirmative responses, red rounded rectangles represent negative responses that signify the agents' type has been exactly revealed, red unfilled rectangles represent negative responses that do not exactly reveal the agent's type. The path of the dotted line represents the order in which queries are asked. In both protocols, the winner (agent 3) does not have their privacy violated. In the ascending-join protocol, the late-queried loser (agent 4) also does not have their privacy violated. \Cref{thm:maxcpkpa} shows that the ascending-join protocol is maximally contextually private.}
    \label{fig:ascjoinvsasc}
\end{figure}

Note that the protocol determines the exact type of losing bidders who don't determine the price. This information is superfluous hence leading to contextual privacy violations at $(\btheta, 2),$ and $(\btheta, 4)$.

Consider in comparison a protocol that works as follows. Query two agents at a time. Whenever an agent drops out replace them by another agent and continue at the running price (see \Cref{fig:ascjoinvsasc}, right, for illustration). In this protocol, agent 4's privacy can be protected, as they are only queried when the price $\btheta_{[2]}$ has already been found: they are a \emph{late-queried loser}. Note that it is not necessary to query agent $4$ earlier in the protocol, because agents 1 and 2's information is sufficient to \enquote{rule out} prices 1 and 2 without involving agent $4$. The so-constructed protocol still computes the single-item second-price auction choice rule, but produces a strict subset of contextual privacy violations, making it a privacy improvement over the ascending protocol.
\end{example}

The protocol given in the example above can be viewed as a \enquote{join} protocol: Agents are only involved if their information is necessary to increase the running price. For a $k$-item auction, only $k+1$ bidders are necessary to rule out a running price. Therefore, we will call the general protocols underlying the example \textit{ascending-join protocols.}

\begin{definition}
The iterative partition of an \emph{ascending-join protocol} is defined inductively. Starting from the root $\bTheta$, for each $\tilde \bTheta$, denote $x \coloneqq \min \phi(\tilde \bTheta)$ the \emph{tentative outcome}.\footnote{For our main case of $x = (W, t)$, we order outcomes lexicographically by first the transfer $t$ and then the strong set order on the set of winners.} For every $\tilde \bTheta$, query agents sequentially in the lexicographic order $i=1, 2, \dots , n$ with a question equivalent to (i.e. results in the same partition as) one that asks: {can rule out outcome $x$ with your information alone}? To be precise: 
\begin{equation}\label{eq:ruleout}
\begin{split}
\Big\{\{ \tilde\theta_i \in \tilde \Theta_i \mid \forall\, \tilde \btheta_{-i} \in \tilde \bTheta_{-i}:  \phi(\theta_i, \btheta_{-i}') \neq x  \}&,  \{ \tilde\theta_i \in \tilde \Theta_i \mid \exists\, \tilde \btheta_{-i} \in \tilde \bTheta_{-i} : \phi(\theta_i, \tilde\btheta_{-i}) = x \} \Big\} \times \tilde\bTheta_{-i}.\\
\text{\enquote{rule out $x$}} \qquad \qquad  & \qquad \qquad \text{\enquote{cannot rule out $x$}}
\end{split}
\end{equation}
If an agent's response rules out an outcome, the protocol starts with agent $1$ at a new lowest (still-possible) tentative outcome. If agent $n$ cannot rule out the outcome, the protocol terminates and allocates.
\end{definition}
Note that while we define this protocol in terms of queries that are partitionally equivalent to asking whether an agent can ``rule out" a tentative outcome given their information and the information that has been revealed so far, the actual protocol (and strategies) need not ask these questions exactly (as many queries might be trivial). These questions in \eqref{eq:ruleout} simply illustrate how the iterative partition is constructed.

Consider again the alternative protocol presented in the example above: it is an ascending-join protocol implemented through simple queries of the form ``Is your type above $\tilde \theta$?" But to see that it is an ascending-join protocol, note that we can interpret the initial query to agent $1$ as asking whether agent 1 can rule out the outcome $(\{1\}, 1)$ (agent 1 wins at price 1). This is trivial: For no value of $\theta_1$ can agent $1$ rule out $(\{1\}, 1)$. Next, agent 2's response rules out outcome $(\{1\},1)$ if $\theta_2 > 1$. As this changes the set $\tilde \bTheta$, the query begins again with agent $1$. Agent $1$ is asked a question whose response can rule out the next smallest outcome $(\{2\},1)$, which they can, as their type is higher than $1$ as well. Their response can also rule out $(\{3\}, 1)$ and $(\{4\}, 1)$. Agent $1$'s response cannot rule out $(\{1\}, 2)$ given their information alone. Hence agent $2$ is queried, and their response can rule out this type profile. The protocol continues, leading to the partition depicted in \Cref{fig:ascjoinvsasc}, right.

The main modification of the ascending-join protocol, as compared to an ascending protocol, is that it considers a tentative outcome rather than a running price, and asks agents questions whose responses can rule out this outcome. 

The ascending-join protocol not only improves on the contextual privacy of the ascending protocol, but it is also unimprovable, i.e. it is maximally contextually private. 

\begin{restatable}[Maximal Contextual Privacy of Ascending-Join Protocols]{theorem}{maxcpkpa}
\label{thm:maxcpkpa}
The ascending-join protocol is maximally contextually private for $\phik$ and has obviously dominant strategies. 
\end{restatable}

The second sentence follows from well-known properties of personal-clock auctions \citep{li2017obviously}. We provide in an appendix an analogous result for a descending protocol for the $k$-item Vickrey auction which is analogous to the overdescending protocol defined in \citet{overdescending}.

The proof considers any protocol $P$ that is a weak contextual privacy improvement of $P_{\ASCJ}$,  $\Gamma(P, \phik) \subseteq \Gamma(P_{\ASCJ}, \phik)$. To this end, we first characterize the set of contextual privacy violations that arise in the ascending-join protocol, $\Gamma(P_{\ASCJ}, \phik)$ (\Cref{prop:delay}). Then, by induction over nodes $\tilde \bTheta \subseteq \bTheta$, we show that any such $P$ that computes the choice rule needs to ask all queries that $P_{\ASCJ}$ does, and hence $\Gamma(P, \phik) = \Gamma(P_{\ASCJ}, \phik)$, and $P_{\ASCJ}$ is maximally contextually private.

The ascending-join protocol $P_{\ASCJ}$ protects some agents' contextual privacy by delaying questions to them. In particular, it offers contextual privacy protection of some losers, in addition to the winners. To formalize whose privacy is protected, we define the set of late-queried losers. Let $d(\btheta)$ be the index of the $(k+1)$st highest bidding agent, with ties broken in the order $1, 2, \dots, n$. The set of \emph{late-queried losers} is
\[
    \LQL (\btheta) = \{ d(\btheta) + 1, d(\btheta) + 2, \dots, n\} \setminus \W (\btheta).
\]
With this definition, we have the following result.
\begin{restatable}[Delay Protects Contextual Privacy]{proposition}{delay}
\label{prop:delay}
    Let $\btheta \in \bTheta$. The ascending-join protocol delays queries to those it protected; it protects winning and late-queried losing bidders. That is $(\btheta, i) \notin \Gamma (\phik, P_{\ASCJ}) \iff i \in   W(\btheta) \cup \LQL (\btheta)$.
\end{restatable}
\begin{proof}
    Consider $\btheta \in \bTheta$ and consider winners, late-queried losers, and all other agents separately. For winners, only the fact that they are winners is learned in an ascending-join protocol, so they do not have a contextual privacy violation. Late-queried losers, for any node $\tilde \bTheta \subseteq \bTheta$ such that the running price, i.e. $t$ for $(W,t) = \min \phi(\tilde \bTheta)$ is smaller than $\btheta_{[k+1]}$, cannot rule out an outcome before an earlier-queried agent rules it out, \emph{for any possible type}, hence there are no contextual privacy violations at these nodes. As soon as the running price reaches $\btheta_{[k+1]}$, these agents could rule out $\min \phi(\tilde \bTheta)$, but do not (for if they did, they would be, in fact, winners). In these cases, the only information that is learned is that they are losers, $\theta_i \le \btheta_{[k+1]}$, which means that their privacy is not violated. All other agents will rule out an outcome with a price lower than $\btheta_{[k+1]}$, which provides the designer with information $\theta_i \le t < \btheta_{[k+1]}$, in excess of what is needed to know to compute $\phik$. Hence, such agents will incur a privacy violation at $\btheta$.
\end{proof}
In particular, delaying (non-trivial) queries to agents preserves their contextual privacy as much as possible in this maximally contextually private protocol. We show in \Cref{sec:informativeness} that the ascending-join protocol is also a minimal element in the relative informativeness order studied in \citet{mackenzie2022menu}.

Agents who are queried late have fewer contextual privacy violations, as they are asked whether they can rule out an outcome \emph{only if} no earlier-queried agent can rule it out. In other words, by delaying these queries, less information is elicited from these agents, and their contextual privacy is preserved. This has implications for design---if there is some reason why some bidders have values that are more important to protect, then they can be queried late in the procedure.

Moreover, our analysis highlights that every mechanism---whether dynamic or static---implicitly involves a choice about which agents' privacy to protect. Our analysis in this section makes explicit these implicit choices for canonical auction formats---like clock auction protocols. In clock auctions, the set of protected agents is either the winner and some of the losers (in ascending formats) or the losers and some of the winners (in descending formats). We show that a maximally contextually private protocol, the ascending-join protocol, is a particular kind of clock auction where some agents whose privacy is protected are queried late in the process. 

Although the insights in this section were focused on auctions, they do offer intuition for how to  design for contextual privacy in non-auction domains. For example, consider in a school choice problem (compare \Cref{thm:stabsecp}), a modified deferred acceptance protocol $(P_{\operatorname{DA}'}, \bsigma_{\operatorname{DA}'})$ which works as follows: initially exclude a student $i$, compute the deferred acceptance outcome for all other students, and query student $i$ at the very end for the first college whose implied score cutoff they clear. This rule does not lead to any privacy violations for agent $i$---it is more contextually private than the usual deferred acceptance algorithm---so delay provides privacy protection even beyond auction domains.

\section{Variations on Contextual Privacy}\label{sec:extensions}

In this section, we pick up the discussion started in \Cref{sec:whycp} and consider two variations on our definition of contextual privacy violations. The definition of a contextual privacy violation has a component that determines what it means for two types for an agent to be \textit{distinguished} by the designer, and a component that determines whether there is some difference to the outcome that would \textit{justify} the distinction. The variations we consider offer different notions of \textit{distinction} and \textit{justification}, respectively. 

We explore these concepts for both theoretical and practical reasons. On the practical side, these extensions may map onto design goals in some settings. On the theoretical side, these criteria help to illuminate connections to other desiderata in mechanism design, and illustrate which of our results are robust to alternative formulations of contextual privacy.

\subsection{Individual Contextual Privacy}\label{sec:indcp}
The first variation, the \emph{individual contextual privacy violation}, requires that if two types are distinguishable for agent $i$, these types must lead to different outcomes under $\phi$ \emph{for agent} $i$. This definition thus only applies in domains where the outcome space $X,$ specifies an allocation for each agent $i\in N$. Let $\phi_i(\boldsymbol \theta)$ denote the outcome under $\phi$ for agent $i$.  

\begin{definition}[Individual Contextual Privacy Violations]
Let $(P, \bsigma)$ be a protocol for $\phi$. Then the set of \emph{individual contextual privacy violations} $\Gamma_{\operatorname{ind.}} (P, \bsigma, \phi) \subseteq N \times \bTheta$ contains tuples $(i, \btheta)$ for which there exists $\theta_i' \in \Theta$ such that 
\[
\text{$P$ distinguishes $(\theta_i, \btheta_{-i})$ and $(\theta_i', \btheta_{-i})$ yet $\phi_i(\theta_i, \btheta_{-i}) = \phi_i(\theta_i', \btheta_{-i})$.}
\]
We call a choice rule \emph{individually contextually private} for $\phi$ if there is a protocol $(P, \bsigma)$ such that $\Gamma_{\operatorname{ind.}}(P,\bsigma, \phi) = \emptyset$.
\end{definition}
As for the usual set of contextual privacy violations $\Gamma(P, \bsigma, \phi)$, we may restrict to partitional protocols and hence study $\Gamma (P, \phi)$. Notice that individual contextual privacy is stronger than contextual privacy---any choice rule that is individually contextually private is also contextually private. If there were an agent $i$ for whom contextual privacy were violated at $\btheta$, then individual contextual privacy would also be violated for $i$ at $\btheta$.

As a normative criterion, individual contextual privacy requires that if the designer can distinguish between two types for agent $i$, then it \emph{should} be the case that agent $i$'s outcome is changed. This criterion captures a notion of legitimacy---agent $i$ may view participation in the mechanism as involving an inherent tradeoff between information revelation and allocation. We can imagine a speech from agent $i$ along the following lines: \enquote{The designer can learn that I have type $\theta_i$ and not $\theta_i'$ as long as the designer's knowledge of this makes a difference to my allocation.} 

Individual contextual privacy is closely related to \emph{non-bossiness}, introduced by \citet{satterthwaite1981strategy}. A choice rule $\phi$ is \emph{non-bossy} if for all $\theta_i, \theta_i' \in \Theta, \btheta_{-i} \in \bTheta_{-i}$,
\begin{equation}
\phi_i(\theta_i, \boldsymbol \theta_{-i}) = \phi_i(\theta_{i}', \boldsymbol \theta_{-i}) \implies \phi(\theta_i, \boldsymbol \theta_{-i}) = \phi(\theta_{i}', \boldsymbol \theta_{-i}).\label{eq:nonbossy}
\end{equation}
We define the set of \emph{non-bossiness violations} $B(P, \phi) \subseteq N \times \bTheta$ consisting of $(i, \btheta)$ where \eqref{eq:nonbossy} fails to hold.

Non-bossiness says that if agent $i$ changes her report from $\theta_i$ to $\theta_i'$ and her allocation is unchanged, then no other agent $j$'s allocation changes either. The idea is that if agent $i$ could unilaterally change her report and affect a change in some agent $j$'s allocation without changing her own allocation, agent $i$ would be \enquote{bossy.}

We can characterize the set of individual contextual privacy violations $\Gamma_{\text{ind.}}$ in terms of contextual privacy violations and bossiness violations. 

\begin{proposition}\label{lem:icp_nonbossy}
The set of individual contextual privacy violations is the union of contextual privacy violations and non-bossiness violations, $\Gamma_{\text{ind.}} (P, \phi) = \Gamma(P, \phi) \cup B(P, \phi)$.
\end{proposition}
\begin{proof}
We first show that $B(P, \phi) \subseteq \Gamma_{\text{ind.}} (P, \phi)$. Let $(i, \btheta)$ be a non-bossiness violation. Hence, there exists a $j \in N \setminus \{i\}$ and type profiles $(\theta_i, \boldsymbol \theta_{-i})$, $(\theta_i', \boldsymbol \theta_{-i})$ such that $\phi_i(\theta_i, \boldsymbol \theta_{-i}) = \phi_i(\theta_{i}', \boldsymbol \theta_{-i})$ but $\phi_j(\theta_i, \boldsymbol \theta_{-i}) \neq \phi_j(\theta_{i}', \boldsymbol \theta_{-i}).$ Because of the latter, the protocol $P$ must distinguish $(\theta_i, \boldsymbol \theta_{-i})$ and $(\theta_i', \boldsymbol \theta_{-i})$. By the former property, this leads to an individual contextual privacy violation for agent $i$. By definition, $\Gamma(P, \phi) \subseteq \Gamma_{\text{ind.}} (P, \phi)$. Hence, $B(P, \phi) \cup \Gamma(P, \phi) \subseteq \Gamma_{\text{ind.}}(P, \phi).$

To show the converse, we show that $(N \times \bTheta) \setminus (\Gamma(P, \phi) \cup B(P, \phi)) = ((N \times \bTheta) \setminus \Gamma(P, \phi)) \cap ((N \times \bTheta)\setminus B(P, \phi)) \subseteq (N \times \bTheta) \setminus \Gamma_{\text{ind.}}(P, \phi)$, which finishes the proof. In other words, we show that if at $(i, \btheta)$ there is neither a contextual privacy nor a non-bossiness violation, there can't be an individual contextual privacy violation. Consider any $(\theta_i, \btheta_{-i})$ and $(\theta_i', \btheta_{-i})$ that $P$ distinguishes. By contextual privacy,
$\phi(\theta_i, \boldsymbol\theta_{-i}) \neq \phi (\theta_i', \boldsymbol\theta_{-i})$. By non-bossiness, $\phi_i(\theta_i, \boldsymbol\theta_{-i}) \neq \phi_i (\theta_i', \boldsymbol\theta_{-i})$ follows. Thus, $P$ does not produce an individual contextual privacy violation at $(i, \btheta)$. 
\end{proof}

Given this relationship of individual contextual privacy and non-bossiness, we are able to leverage existing results to yield unique characterizations for the first-price auction and the serial dictatorship. 

Recall that in an \emph{object assignment setting} there is a finite set $C$ of objects, outcomes are $X = \{ x \colon N \to C \text{ injective}\}$ and types are $\theta_i \in \R^{\lvert C\rvert}$, for some set of \emph{objects} $C$. We assume strictness, $\theta_c \neq \theta_{c'}$ for $c \neq c' \in C$ with utilities $u_i(x ; \theta_i) = (\theta_i)_{x(i)}$. We say an object assignment choice rule is \emph{neutral} if for all type profiles $\btheta \in \bTheta,$  and permutations $\pi \colon N \to N$, $\phi(\pi\cdot\btheta) = \pi (\phi(\btheta))$.  
\begin{proposition}\label{prop:sd_icp}
An object assignment choice rule $\phi$ is individually contextually private, neutral and strategyproof if and only if it is a serial dictatorship.
\end{proposition}
\begin{proof}
The serial dictatorship is contextually private and non-bossy, hence individually contextually private by \Cref{lem:icp_nonbossy}. It also is strategyproof.

Conversely, if $\phi$ is individually contextually private, then it is also non-bossy by \Cref{lem:icp_nonbossy}. By \citet{svensson1999strategy}, a mechanism is neutral, strategyproof and non-bossy only if it is a serial dictatorship. So, if $\phi$ is individually contextually private, neutral and strategyproof, then it is a serial dictatorship.
\end{proof}
Note here that efficiency is implied by neutrality \citep{bade2016gibbard}. A characterization of efficient, strategyproof and non-bossy mechanisms is supplied by \citet{pycia2017incentive} and \citet{bade2020random}---note that the class characterized in those works is larger than just serial dictatorships. 

Similarly, we can draw on existing literature on non-bossiness to uniquely characterize the first-price auction. A single-item \emph{auction choice rule} is some $\phi: \bTheta \to N \times \R_+$ for $\Theta_i \subseteq \R_+$, $i =1, 2, \dots, n$, with outcomes $(i, t) \in  N \times \mathbb R_{+}$, which identify a winner $i$ and a price $t$. An auction rule is \textit{efficient} if the winner has the highest valuation. 
\begin{proposition}
The first-price auction is the unique individually contextually private, efficient, and individually rational auction rule. 
\end{proposition}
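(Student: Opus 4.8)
The plan is to prove both directions by leaning on the decomposition in \autoref{lem:icp_nonbossy} (individual contextual privacy $=$ contextual privacy $+$ non-bossiness) and on the rank-dependent characterization in \autoref{secp:fpachar}.

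For the existence direction I would check that $\phi^{\text{FP}}$ is efficient, individually rational, and individually contextually private. Efficiency is immediate, since the highest type always wins. Individual rationality is also immediate: the winner pays exactly their own value and so obtains surplus $0$, while every loser obtains $(0,0)$. For individual contextual privacy it suffices, by \autoref{lem:icp_nonbossy}, to verify contextual privacy and non-bossiness. Contextual privacy is \autoref{prop:secp_fp} (the descending protocol). Non-bossiness I would check on the two cases of a unilateral deviation $\theta_i \mapsto \theta_i'$ that leaves $\phi_i$ fixed: if $i$ loses at $\theta_i$ then $\phi_i=(0,0)$ forces $i$ to lose at $\theta_i'$ as well, and when $i$ loses both the winner and the winning price $\max_{k\neq i}\theta_k$ are independent of $\theta_i$, so the whole outcome is unchanged; if $i$ wins at $\theta_i$ then $\phi_i=(1,\theta_i)$ can be preserved only when $\theta_i'=\theta_i$, so the deviation is trivial. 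Hence $\phi^{\text{FP}}$ is non-bossy and therefore individually contextually private.

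For uniqueness, let $\phi$ be efficient, individually rational, and individually contextually private; by \autoref{lem:icp_nonbossy} it is contextually private and non-bossy. Efficiency pins down the allocation (the highest type wins), and, the auction being standard, only the winner pays. The engine of the argument is non-bossiness, which I would use to strip the payment of any dependence on the losers: perturbing a losing agent's report among reports that keep it a loser (and keep the winner fixed) leaves that agent's own outcome at $(0,0)$, so by non-bossiness the whole outcome, and in particular the winner's payment, is unchanged. Iterating over the losers shows the winner's payment is a function of the winner's own top value alone, i.e.\ it depends only on the first-order statistic $\theta_{[1]}$. This places $\phi$ in the rank-dependent class $\Phi$ of \autoref{secp:fpachar}, whence efficiency and contextual privacy force the rule to be first-price, and individual rationality is what selects the payment $f(\theta_{[1]})=\theta_{[1]}$ over a discounted price $f(\theta_{[1]})<\theta_{[1]}$.

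The main obstacle is exactly this last selection: nailing the payment to be \emph{precisely} the winner's value. Non-bossiness only removes loser-dependence, and contextual privacy on its own is compatible with many injective payment functions $f(\theta_{[1]})$ (an injective $f$ never creates an inseparable pair at the top value, so no product set of the kind forbidden by \autoref{prop:characterization} or \autoref{cor:corners} arises); it is individual rationality, $f(\theta_{[1]})\le\theta_{[1]}$, fed into the uniqueness of \autoref{secp:fpachar}, that collapses the admissible payment to the identity. I expect two further technical wrinkles in the non-bossiness step to require care: on a finite type space one must always be able to move a losing report while keeping the agent a loser and the winner fixed, which needs attention at lexicographic ties (handled with the order-statistic tie-breaking used throughout); and one must confirm that the payment cannot depend on the winner's \emph{identity}, so that $f$ is genuinely a single function of the order statistic as \autoref{secp:fpachar} demands. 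Both of these, together with the value-versus-discount pin-down, are where individual rationality does work that contextual privacy alone cannot.
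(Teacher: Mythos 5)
Your existence direction is essentially the paper's: the paper simply asserts that the descending protocol of \autoref{prop:secp_fp} is individually contextually private and that efficiency and individual rationality are well known; your explicit non-bossiness check combined with \autoref{lem:icp_nonbossy} is a correct and slightly more careful way of making that assertion. The divergence, and the problem, is in the uniqueness direction. The paper does not attempt a self-contained argument there at all: it reduces individual contextual privacy to non-bossiness and then invokes an external result (Theorem 1 of the cited non-bossiness paper) to conclude that any efficient, individually rational, non-bossy auction rule implements a first-price auction. You instead try to assemble uniqueness from the paper's own lemmas, and the assembly fails at exactly the step you flag as the main obstacle.

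Concretely: non-bossiness does strip loser-dependence from the payment, so the winner pays some $f(\theta_{[1]})$ (setting aside the winner-identity wrinkle, which you flag but do not resolve). But from there, individual rationality gives only the inequality $f(\theta_{[1]})\le \theta_{[1]}$, not equality, and, as you yourself observe, contextual privacy is compatible with \emph{any} injective $f$. A rule in which the highest bidder wins and pays half her bid is efficient, individually rational, contextually private (run the same descending protocol; distinct winning types yield distinct payments, and losers' types below the maximum are never separated), and non-bossy (a unilateral change that keeps an agent a loser changes nothing, and any change by the winner that keeps her winning changes her own payment). So the three hypotheses you are manipulating do not "collapse the admissible payment to the identity," and \autoref{secp:fpachar} cannot close the gap for you: its statement makes no use of individual rationality, and its proof is a Corners Lemma argument that only excludes payments depending on $\theta_{[k]}$ for $k\ge 2$, leaving the whole family $f(\theta_{[1]})$ untouched. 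Eliminating that family is precisely the work the paper outsources to the external theorem (or to an implicit identification of rules that differ only by an injective re-labelling of the winner's payment); your writeup supplies neither, so the uniqueness claim is not established.
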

\begin{proof} 
By \Cref{prop:fbacp}, the first-price auction is contextually private. It is also non-bossy. Hence, by \Cref{lem:icp_nonbossy}, it is also individually contextually private. It is well known that the first-price auction is efficient and individually rational. 

Let $\phi$ be any individually contextually private, efficient and individually rational auction rule. By the characterization of efficient, non-bossy and individually rational auctions from \citet[Theorem 1]{pycia2022non}, this means that it is a protocol for the first-price auction.
\end{proof}

\subsection{Group Contextual Privacy}\label{subsec:groupcp}
Another variation is the \emph{group} contextual privacy violation, already introduced in \Cref{sec:whycp}. Group contextual privacy requires that if a protocol distinguishes two type \emph{profiles}, they must lead to different outcomes. 

\begin{definition}[Group Contextual Privacy Violations]
A protocol $P=(V,E)$ for a social choice function $\phi$ with strategies $\bsigma$ produces a \emph{group contextual privacy violation} at $\btheta \in \bTheta$ if there is a type $\btheta' \in \bTheta$ such that 
\[
\text{$P$ distinguishes $\btheta$ and $\btheta'$ yet $\phi(\btheta)=\phi(\btheta')$.}
\]
\end{definition}
This definition strengthens contextual privacy. Here, it is because it strengthens the underlying notion of distinguishability---two type profiles $\boldsymbol\theta, \boldsymbol\theta'$ are distinguishable if they belong to different terminal nodes. Regular contextual privacy's notion of distinguishability is on the agent-level---two types $\theta_i, \theta_i'$ are distinguishable if they belong to different terminal nodes, \emph{holding all other agent's types fixed} at $\btheta_{-i}.$ 

We characterize the set of group contextually private protocols next. 
\begin{proposition}\label{prop:groupcp}
An iterative partition protocol $P$ is group contextually private if and only if for every node $\tilde \bTheta$,
\[
\bigcupdot_{\tilde \bTheta' \in \children(\tilde \bTheta)} \phi(\tilde \Theta')  = \phi(\tilde \bTheta)
\]
is a disjoint union.
\end{proposition}
\begin{proof}
First assume that $P$ is group contextually private, and assume for contradiction that $\tilde \bTheta$ has distinct children $\tilde \bTheta', \tilde \bTheta''$ such that $\phi(\tilde \bTheta') \cap \phi(\tilde \bTheta'') \neq \emptyset$. Hence, there are $\boldsymbol\theta' \in \tilde \bTheta' $ and $\boldsymbol\theta'' \in \tilde \bTheta'' $ such that $\phi(\btheta') = \phi(\btheta'')$, which contradicts group contextual privacy.

Next assume that reachable outcomes are disjoint at each node. Let $\boldsymbol\theta'$ and $\boldsymbol\theta''$ be distinguished at $\tilde \bTheta$. As outcomes are disjoint, it must be that $\phi(\boldsymbol\theta') \neq \phi(\boldsymbol\theta'')$. Hence the protocol is group contextually private.
\end{proof}
This result says that a choice rule $\phi$ is group contextually private if and only if it can be represented by a protocol in which, at every node, the agent's choice rules out a subset of the outcomes.

This characterization implies that the serial dictatorship and the first price auction choice rule are group contextually private. In the serial dictatorship protocol, whenever an agent is called to play, they obtain their favorite object among those that remain. So, their choice rules out the outcomes in which a different agent gets their favorite object that remains. In the first price auction, the agent agent renders a particular outcome impossible, namely the outcome under which they win the good at a particular price. 

\section{Conclusion}\label{sec:conclusion}
This article starts from a simple minimal-revelation principle: avoid disclosing superfluous information where possible. From this principle, and the core idea of a ``contextual privacy violation" that allows us to express it, our analysis begins to build a formal language for describing the tradeoffs between privacy and economic objectives, deriving new protocols that are maximally contextually private subject to implementation. The analysis is far from complete---rather, it lays out many paths for future work exploring this simple minimal-disclosure principle in mechanism design and computation.

A first avenue for future study concerns the mediating technologies that are available to the designer. Throughout, we assume that every report by an agent is fully observed by the designer. However, in many applications, trusted technologies mediate communication so that what an agent submits need not be fully revealed to the designer. Trusted third parties or advanced cryptographic protocols are extreme examples of this---they are mediating technologies which can in some cases reveal to the designer nothing but what's necessary---rendering trivial the analysis conducted here. But there are intermediate communication environments worth studying too; for instance, with a ``ballot box'' the designer can observe only aggregates—e.g., that some number of agents have value at least $x$—without learning \emph{which} agents those are.\footnote{A working paper version of this article analyzes contextual privacy when the designer can construct what we call ``count" protocols with such ballot boxes \citep[Appendix D]{wp_cp}.} A thorough understanding of how contextual privacy-maximizing protocols change with assumptions on the mediating technology would be valuable. 

A second direction for further research considers privacy relative to different observers. Our analysis treats the designer as the sole privacy-relevant observer, but in practice, privacy from other agents (and coalitions of agents) and from third parties is often equally—or more—salient. A unified treatment that considers contextual privacy by the observer---designer, other players, and outsiders---would make explicit the trade-offs across these viewpoints. 

Third, our analyses in this paper are largely independent of the statistical structure of the environment. In this sense it is extreme: any time the designer learns information, it must make some difference. This makes contextual privacy as written here overly sensitive to non-deterministic mechanisms---if, with some small probability $p$, the outcome depends on the entire type profile, any protocol is fully contextually private. A smoother notion of contextual privacy could make explicit use of the designer’s prior. One natural approach may be an $\varepsilon$-$\delta$ formulation: information disclosure of size at most $\varepsilon$ about the type profile is deemed justified only if it shifts the distribution over outcomes by at least $\delta$ (in expectation under the prior). This would better capture settings where learning and data externalities are first-order, and it would permit a systematic treatment of \emph{approximate} implementation, trading exact execution of the choice rule for measured privacy loss. The resulting statistical-contextual privacy-optimal mechanisms may have a ``learning" phase and an ``exploitation" phase---hinting at a link between the ascending–join protocol and learning algorithms. This approach would also offer a more precise comparison between contextual privacy and differential privacy and its variants.

A final direction pushes toward a systematic treatment of privacy and incentives, which we have only partially addressed here. While our positive constructions incidentally enjoy good incentive properties, we do not yet have a principle that selects, among feasible mechanisms, those that are maximally contextually private subject to dominant‐strategy or Bayesian incentive constraints. We also do not have an analysis of when, generally, incentive constraints are complementary to or in tension with the demands of contextual privacy.

\pagebreak 

\bibliographystyle{aer}
\bibliography{refs_aer.bib}

\pagebreak 

\appendix
\startcontents

\section*{Appendices}
\printcontents{}{0}{}

\section{Proofs}\label{sec:proofs}

\subsection{Impossibility of Contextually Private Stable Assignment}
\stabsecp*
\begin{proof}
The proof constructs collective but not individual pivotality. 

Let $s_1> s_2>s_3>s_4$. Fix the type profile of all $n-2$ agents that are not $i$ or $j$ to be $\btheta_{-ij}$ where each agent has a score greater than $s_1$ for their top choice object, and their top choice object has capacity to accommodate them. Assume further that the remaining two spots are for different objects. Label these objects with remaining spots $a$ and $b$.

Consider the final two agents $i, j \in N$. Choose their type profiles to be:
\begin{align*}
\theta_i &= (a \succ_i b, s_i(a)=s_1, s_j(b)=s_4), &
\theta_i'&= (b \succ_i a, s_i(a)=s_3, s_j(b)=s_2)\\
\theta_j &= (b \succ_j a, s_j(a)=s_4, s_j(b)=s_1), &
\theta_j'&= (a \succ_j b, s_j(a)=s_2, s_j(b)=s_3).
\end{align*}
The preferences of agents $i$ and $j$ for objects $a$ and $b$ are represented in \Cref{fig:stability}. Agent $i$ and $j$'s scores and preferences for other schools are arbitrary. 

Let $x$ be the outcome in which agent $i$ is matched to school $a$ and $j$ is matched to $b$. Let $y$ be the outcome in which agent $i$ is matched to $b$ and $j$ is matched to $a$. In both $x$ and $y,$ all agents not $i$ or $j$ are assigned to their top choice object at which they have a high score.
\begin{figure}
    \centering
    \includegraphics[scale=.2]{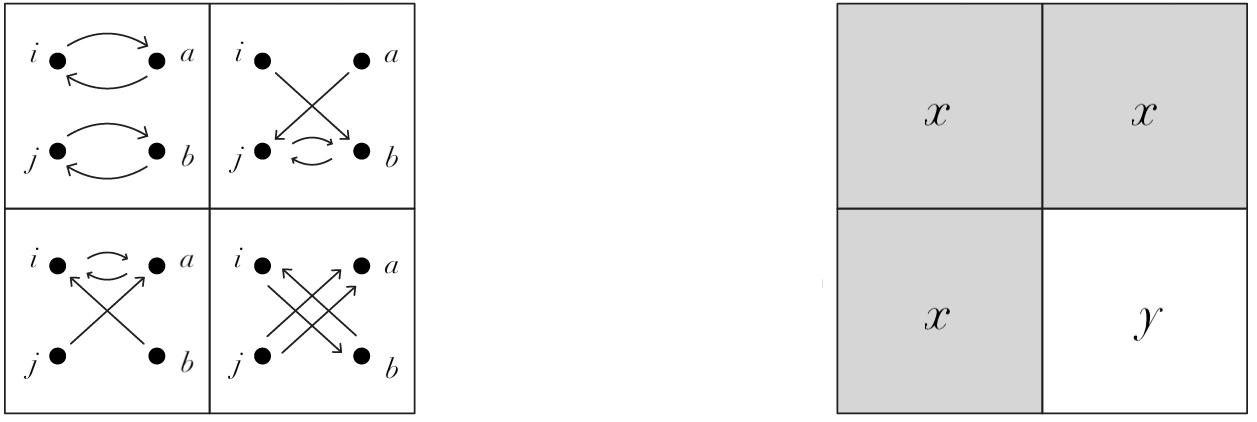}
    \caption{Constructing collective but not individual pivotality for college assignment. Agent types $\theta_i, \theta_i', \theta_j, \theta_j'$ (left, arrows from agents denote favored object, arrows from objects denote high score); outcomes under any stable choice rule (right, where $x = ((i,a), (j,b))$ and $y= ((j,a), (i,b))$).}
    \label{fig:stability}
\end{figure}

Stability requires that $\phi(\theta_i, \theta_j, \boldsymbol\theta_{-ij}) = (\theta_i, \theta_j', \boldsymbol\theta_{-ij})= (\theta_i', \theta_j, \btheta_{-ij})=x$ while $\phi(\theta_i', \theta_j', \boldsymbol\theta_{-ij}) =y.$ We have chosen a particular $\boldsymbol\theta_{-ij} \in \bTheta_{-ij},$ and particular $\theta_i, \theta_i', \theta_j, \theta_j' \in \Theta$ such that the condition of \Cref{thm:characterization} holds.
\end{proof}

\subsection{Sufficiency of Bimonotonic Protocols for Interval Pivotal Choice Rules}

\bimon*

\begin{proof}
We begin with an arbitrary protocol and transform it into a bimonotonic protocol with the same set of privacy violations by using a process of \emph{filling-in}.

Filling-In consists of repeated injection of nodes. For notational simplicity, we will denote a node of an iterative partition by $v$ (instead of $\tilde \bTheta$).

\begin{definition}[$v$-before-$v'$ injected protocol]
    Let $P$ be an iterative partition, let $v$ be a partition of $\Theta_i$, and $v'$ be a non-root node in $P$. Define the $v$-before-$v'$-injected protocol $P_{v, v'}$ to be the protocol in which the parent of $v'$ is partitioned first by $v$ and then by $v'$. See \Cref{fig:injection} for an illustration of injection. 
\end{definition}
\begin{figure}[h!]
    \centering
    \includegraphics[width=.5\linewidth]{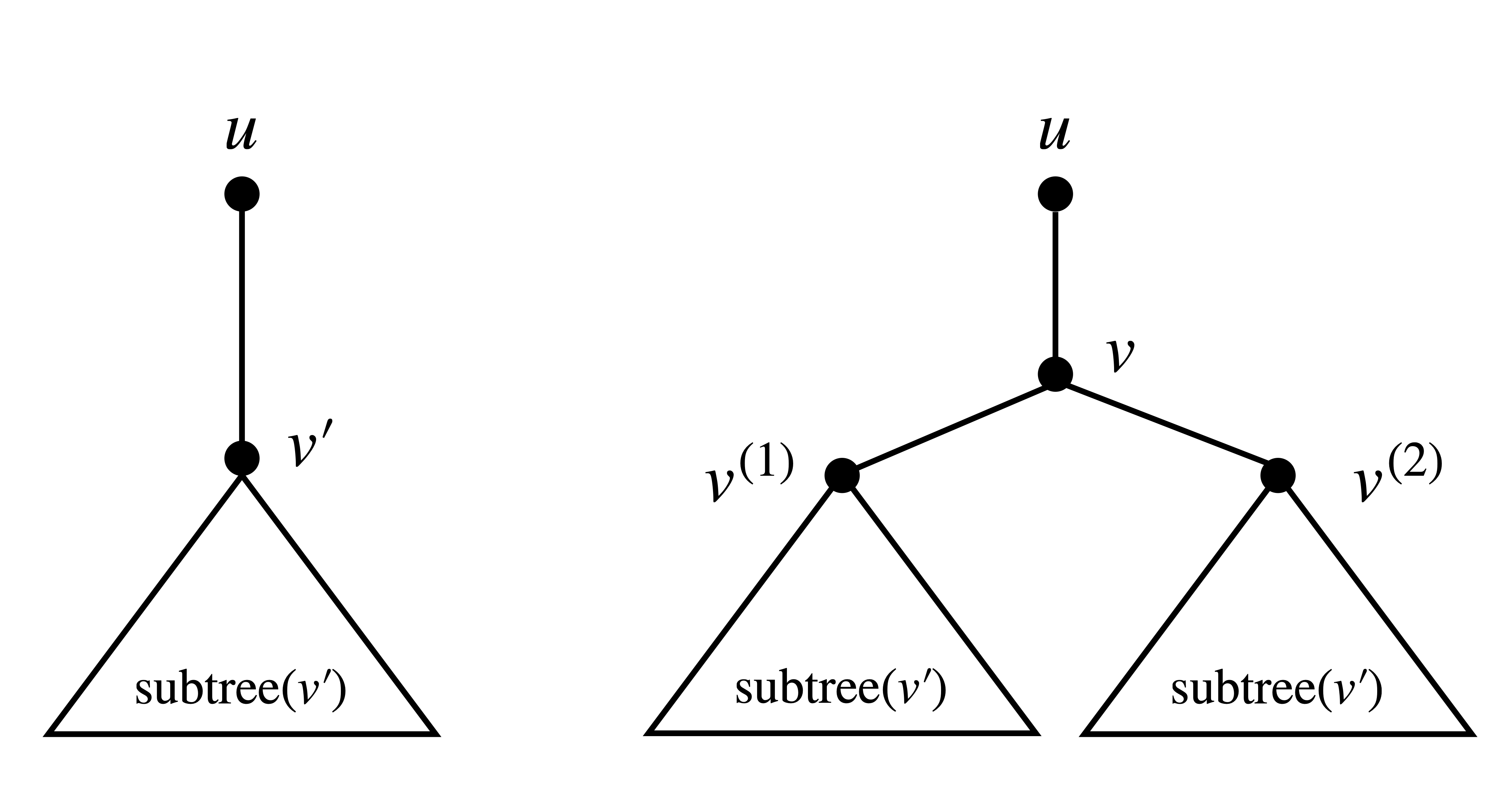}
    \caption{Protocol $P$ (left); $v$-before-$v'$ injected protocol $P_{v,v'}$ (right). }
    \label{fig:injection}
\end{figure}
\begin{definition}
    Let $\Theta$ be a finite ordered type space and denote by $\tsucc (\theta)$ the adjacent type higher than $\theta$ in $\Theta$ (if it exists) and by $\pred (\theta)$ the adjacent type lower than $\theta$ in $\Theta$ (if it exists). We call $\underline{\theta}_{v,i}$ the \emph{lowest separator} if 
    \begin{equation*}
    \underline{\theta}_{v,i} = \min \{ \theta_i \in \Theta_i   \mid  \text{ $v$ distinguishes } \theta_i \text{ and }\tsucc( \theta_i )\}  
    \end{equation*} 
   and we call $\overline{\theta}_{v,i}$ the \emph{highest separator} if 
   \begin{equation*}
    \overline{\theta}_{v,i} = \max \{ \theta_i \in \Theta_i  \mid \text{ $v$ distinguishes } \pred (\theta_i) \text{ and } \theta_i\}. 
    \end{equation*}
\end{definition}
That is, the lowest separator at node $v$ is the lowest element in the type space that is distinguished from its next highest type at $v$. The highest separator at node $v$ is the highest element in the type space that is distinguished from its next lowest type at node $v$. If these sets are empty, the query is trivial in that it does not affect the knowledge of the designer. It is then without loss to drop this query from the protocol.  

Define the quantities of the \emph{pairwise} lowest resp. highest separators,
\begin{equation}
\begin{split}
    \underline{\theta}_{v, v'} & = \min \{  \underline{\theta}_{v,i},  \underline{\theta}_{v',i}\} \\
    \overline{\theta}_{v, v'} & = \max \{  \overline{\theta}_{v,i},  \overline{\theta}_{v',i}\}. \label{eq:limitdefhigh}
    \end{split}
\end{equation}
We have the following result:
\begin{lemma}
\label{lem:injection}
    Let $\phi$ be interval pivotal, $i(v) = i(v')$, and $v' \succ v$. Then, for any $\tilde \theta \in [\underline{\theta}_{v, v'}, \overline{\theta}_{v, v'}]$, a for a partition $v'' = \{\{\theta_i \in \Theta_i : \theta_i \le \tilde\theta\}, \{\theta_i \in \Theta_i : \theta_i > \tilde\theta\}\} $
    \[
   \Gamma(P, \bsigma, \phi) = \Gamma(P_{v'', v'}, \bsigma', \phi).
    \]
\end{lemma}
In other words, consider an agent that is asked two queries $v$ and $v'$ one after the other on one path through the partition. Inserting a threshold query \emph{in between} $v$ and $v'$, where the threshold $\tilde \theta$ lies \emph{between} the highest and lowest types that $v$ and $v'$ can distinguish, does not change to the set of contextual privacy violations.

\begin{proof}
    We show first that $\Gamma(P_{v'',v'}, \bsigma', \phi) \subseteq \Gamma(P, \bsigma, \phi)$. Let $(\btheta, i) \in \Gamma(P_{v'',v'}, \bsigma', \phi) \setminus \Gamma(P, \bsigma, \phi)$ for contradiction. Then there is a $\theta_i'$ such that $(\theta_i, \btheta_{-i})$ and $(\theta_i', \btheta_{-i})$ are distinguished at  $v''$ and  
    \[
    \phi (\theta_i, \btheta_{-i}) = \phi (\theta_i', \btheta_{-i}).
    \]
    By symmetry, it is without loss to assume $\theta_i \le  \theta_i'$. As $\theta_i$ and $\theta_i'$ are distinguished, and by the definition of threshold queries, it must be that $\theta_i \le \tilde \theta < \theta_i'$.

    We will now show that $\btheta$ has a contextual privacy violation for agent $i$ at node $v$ or node $v'$ in $P$, leading to a contradiction. As $\phi (\theta_i, \btheta_{-i}) = \phi (\theta_i', \btheta_{-i})$ and because of interval pivotality, there are two cases, corresponding to the \enquote{upper} or \enquote{lower} interval over which $\theta_i \mapsto \phi(\theta_i, \btheta_{-i})$ is constant:
    \begin{itemize}
        \item[(a)] $\phi(\hat \theta_i, \btheta_{-i}) = \phi(\theta_i', \btheta_{-i})$ for all types $\hat \theta_i \le \theta_i'$ (in particular this holds for $\hat{\theta}_i = \underline{\theta}_{v, v'}$ by the definition of the lowest separator), or
        \item[(b)] $\phi(\hat \theta_i, \btheta_{-i}) = \phi( \theta_i, \btheta_{-i})$ for all types $\hat \theta_i > \theta_i$ (in particular this holds for $\hat{\theta}_i = \overline{\theta}_{v, v'}$ by the definition of the highest separator).
    \end{itemize}
    For the first case (a), for notational simplicity call $v$ the node that attains the minimum in \eqref{eq:limitdefhigh}. By definition of $\underline \theta_{v, v'}$, the types $ \underline{\theta}_v, \tsucc (\underline\theta_v)$ are distinguished at $v$. There are two further cases within case (a):
    \begin{itemize}
        \item $(\theta_i, \btheta_{-i})$ and $(\theta_i', \btheta_{-i})$ are distinguished at $v$. In this case, both type profiles produce contextual privacy violations for agent $i$ at $v$.
        \item $(\theta_i, \btheta_{-i})$ and $(\theta_i', \btheta_{-i})$  are \emph{not} distinguished at $v$, but this means that $\theta_i$ is distinguished from either of $ \underline{\theta}_v, \tsucc (\underline\theta_v)$, or both. So, $(\btheta, i)$ produces a contextual privacy violation at $v$ with either $(\underline\theta_{v}, \btheta_{-i})$ or $(\tsucc(\underline\theta_{v}), \btheta_{-i})$.
    \end{itemize}
 For the second case (b), we can follow similar reasoning, but with flipped inequality signs and $v$ replaced with $v''$. In this case, one proceeds by showing that a contextual privacy violation is produced at the node that attains the maximum in \eqref{eq:limitdefhigh}.
    
For the converse direction, observe that $P_{v'',v'}$ reveals weakly more information to the designer than $P$. Hence, also $\Gamma(P, \bsigma, \phi) \subseteq \Gamma(P_{v'',v'}, \bsigma', \phi).$
\end{proof}

Using \Cref{lem:injection}, we will \enquote{fill-in} all queries between the highest and lowest separator. This results in a \emph{filled-in} protocol.
\begin{definition}
    We call a protocol $P$ \emph{filled-in} if every query is a threshold query and the threshold for every query $v$ and next query $v'$ must have thresholds that are adjacent in the agent's type space.
\end{definition}
\begin{lemma}
\label{lem:filledin}
    Let $\phi$ be an interval pivotal choice rule. Then, for any protocol $(P, \bsigma)$, there is a filled-in protocol $(P', \bsigma)$ such that $\Gamma(P, \bsigma, \phi)= \Gamma(P', \bsigma', \phi).$
\end{lemma}
\begin{proof}
We prove this lemma in three steps: anchoring, inserting and deleting.
\begin{description}
    \item[Step 0: Grounding.] Add trivial queries to all agents in the beginning, where $r_i$ can be thought of as a copy of the root node. These allow us to perform protocol injection on initial queries to agents and resolves issues for protocols where an agent only gets a single query.  Inserting trivial queries affects neither computability of the choice rule nor contextual privacy violations. 
    \item[Step 1: Anchoring.] We then introduce threshold queries at the highest ($\overline \theta_{v, v'}$) and lowest ($\underline \theta_{v,v'}$) separators for any pair of node $v$ and next node $v'$ for this agent, before $v'$. This leads to the introduction of at most $2\lvert V \rvert$ many new queries and does not affect whether $P$ is a protocol for $\phi$.
    \item[Step 2: Inserting.] For any pair of threshold queries, nodes $v$ and next node $v'$ to the same agent, whose thresholds are not adjacent in $\Theta,$ insert threshold queries for all thresholds between $\thresh(v)$ and $\thresh(v')$ (in the type space) before the later of $v,v'$. This leads to at most $2\lvert V\rvert^2\lvert \Theta\rvert$ many queries.
    \item[Step 3: Deleting.] For any non-threshold queries $v$, all thresholds between $\underline \theta_v$ and $\overline{\theta}_v$ were added in Steps 1 and 2. This implies that all non-threshold queries can be deleted without affecting computability of the choice rule. Additionally, their deletion cannot affect contextual privacy violations because any distinguished types under the queries are distinguished at at least one of the added threshold queries.
\end{description}
     The resulting protocol is filled-in and has the same set of privacy violations. 
\end{proof}

A filled-in protocol has almost a bimonotonic iterative partition. But, it may have sequences of threshold queries to a single agent that increase and then decrease (or vice versa) in the thresholds. This already yields a bimonotonic protocol, as all non-monotonic queries after the first query are trivial (that is, do not distinguish any type profiles). The associated iterative partition is, hence, bimonotonic.

    \begin{figure}[h!]
    \centering
    \includegraphics[width=.7\linewidth]{img/scrubbing.png}
    \caption{Trivial queries to agent $i$ in protocol $P$ after initial query with \enquote{higher-than} answer in light grey (left); Trivial queries to agent $i$ in protocol $P$ after initial query with \enquote{lower-than} answer in light grey (right). The $x$-axis represents the order of the queries in protocol $P$. The $y$-axis is the threshold of threshold queries in protocol $P$. Each query in the original protocol is a dot, queries deleted during scrubbing are gray dots.}
    \label{fig:scrubbing}
\end{figure}  
\end{proof}

\subsection{Maximal Contextual Privacy of the Ascending-Join Auction}
\maxcpkpa*

\begin{proof}
For the second statement, we observe that the ascending-join protocol is a personal clock auction, which are known to have equilibria in obviously dominant strategies, see \citet[Theorem 3]{li2017obviously}.

The first statement of \Cref{thm:maxcpkpa} follows from the following lemma.

\end{proof}
\begin{lemma}
\label{lemma:maxcpkpa}
    Let $P$ be a weak contextual privacy improvement of $P_{\ASCJ}$, $\Gamma(P, \phik) \subseteq \Gamma(P_{\ASCJ}, \phik)$. Then, $P$ is contextual privacy equivalent to $P_{\ASCJ}$, $\Gamma(P, \phik) = \Gamma(P_{\ASCJ}, \phik)$.
\end{lemma}
In the proof, we will use the notation $\bTheta_v$ for a node in the iterative partition.
\begin{proof}
Let $P$ be such a protocol. We prove this statement by induction over the nodes $\bTheta_v = \tilde \bTheta$ of $P$. We consider $\bTheta_v = \bTheta$ and $\bTheta_v \neq \bTheta$ separately. In particular, we will show that to for any $P$, being a privacy improvement over $P_{\ASCJ}$ means that the nodes that can be reached have a particular structure, which we call \emph{$\W$- and $\LQL$-protective information sets}.
\begin{definition}
    We call a set of type profiles $\bTheta_v \subseteq \bTheta$ a $\W$- and $\LQL$-\emph{protective} if there is a type $\tilde\theta \in \Theta$ such that:
    \begin{enumerate}
        \item There are exactly $k+1$ agents $i$ such that $\Theta_{v,i} = \{\theta_i \mid \theta_i > \tilde\theta\}$.
        \item For agents $i = 1, 2, \dots, l$ with $l$ being the highest index among agents who ever got a question (i.e. $l \coloneqq \max\{l \colon \Theta_{v,l} \neq\Theta\}),$ it must be that, either: 
        \begin{itemize}
            \item[(i)] $\Theta_{v,i} = \{ \theta_i \mid \theta_i > \tilde\theta \}$ \enquote{\emph{(active agent)}}, or  
            \item[(ii)] $\Theta_{v,i} \subseteq \{  \theta_i \mid \theta_i \le \tilde\theta \}$ \enquote{(\emph{dropped-out agent})}.
        \end{itemize}
    \end{enumerate}
    We call all other agents {\emph{idle}}.
\end{definition}
For a protective information state $\bTheta_v$, we call $\tilde\theta$ the \emph{running price}. Sets that are protective have $k+1$ agents about whom it is known only that their type is strictly above the running price (active agents).  Nothing is known about agents with indices \enquote{to the right} of the agent with the highest index among agents who have ever been asked a question (idle agents). All other agents have a type weakly below the running price (dropped-out agents). We illustrate this definition in \Cref{fig:knormal}. 

\begin{figure}
    \centering
    \includegraphics[width=.65\linewidth]{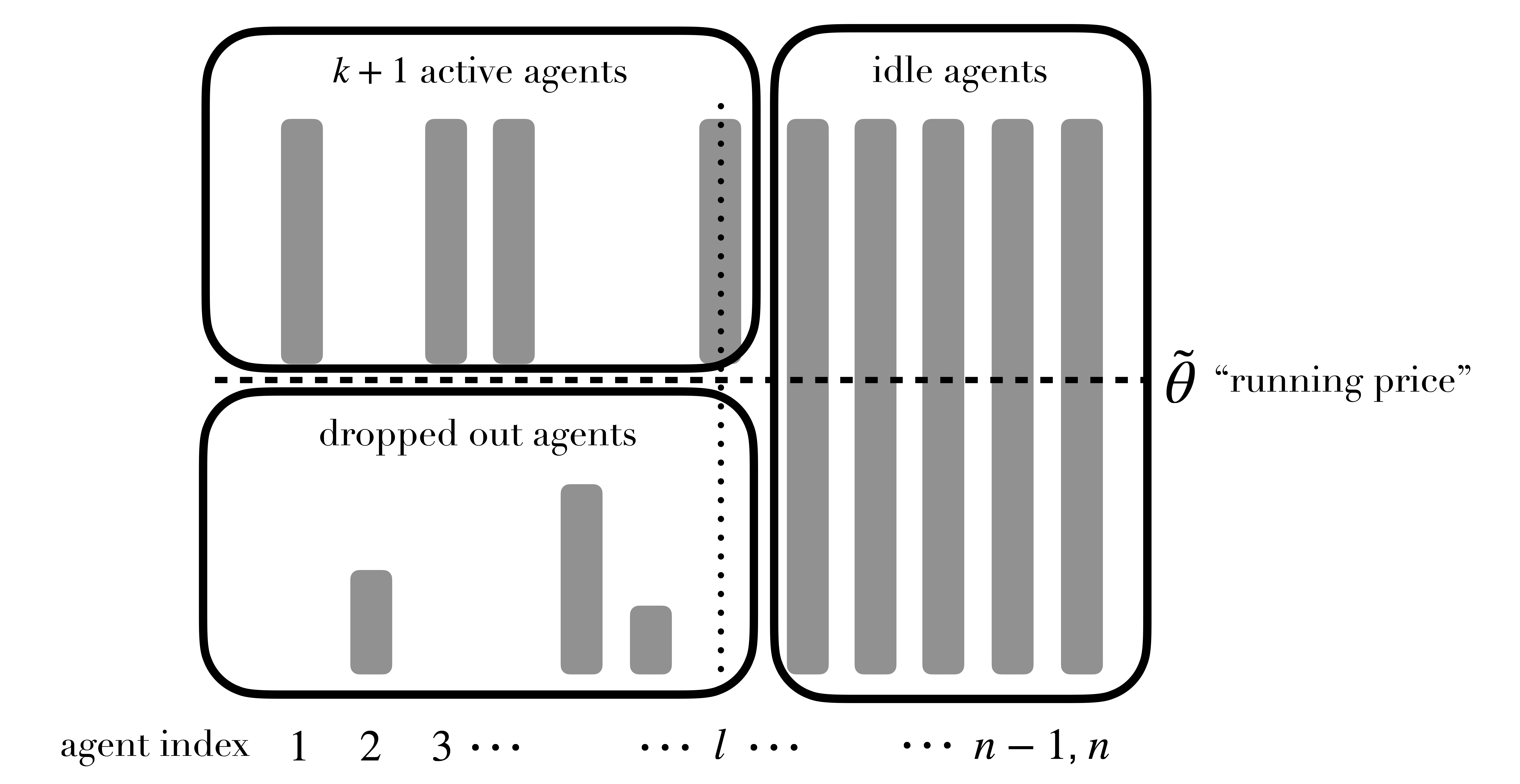}
    \caption{$\W$- and $\LQL$- Protective Information State}
    \label{fig:knormal}
\end{figure}

\begin{definition}
    We say that a query $v$ is $\phi$-redundant if for any $\theta_i, \theta_i'$, $\btheta_{-i}$ such that $(\theta_i, \btheta_{-i})$ and $(\theta_i', \btheta_{-i})$ are distinguished at $v$, we have
    \[
        \phi (\theta_i, \btheta_{-i}) = \phi (\theta_i', \btheta_{-i}).
    \]
\end{definition}
\end{proof}
We make the following observation, which is direct from the definitions.
\begin{lemma}
\label{lem:redundant}
    Let $P$ be a protocol for $\phik$ and $v$ be a $\phi$-redundant query in $P$. Any protocol $P'$ that removes $v$ and continues with \emph{any} of the children of $v$ from $P$ is still a protocol for $\phi$. $P'$ is weakly more contextually private than $P$.
\end{lemma}

We next make an additional observation about the structure of protection sets (that is, sets $A \subseteq \bTheta \times N \setminus \Gamma(P, \phi)$) to more easily reason about which queries are forbidden under a given protection set. To this end, the structure of a \emph{closed} protection set is helpful.

\begin{definition}[Closed protection sets]
    We call a protection set $A$ \emph{downward closed} if we have that for $((\theta_i, \btheta_{-i}), i) \in A$  and $\theta_i' \le \theta_i$,
    \[
     ((\theta_i', \btheta_{-i}), i) \in A.
    \]
    A protection set is $A$ \emph{upward closed} if we have that for $((\theta_i, \btheta_{-i}), i) \in A$ and $\theta_i' \ge \theta_i$,
    \[
     ((\theta_i', \btheta_{-i}), i) \in A.
    \]
    If a set is downward closed or upward closed, we call it \emph{closed}. 
\end{definition}
\begin{definition}[$\phi$-closed protection sets]
    If $A$ is a closed protection set, and for any $i =1, 2, \dots, n$, $\theta_i, \theta_i' \in \Theta$ and $\btheta_{-i} \in \bTheta_{-i}$, such that $((\theta_i, \btheta_{-i}), i), ((\theta'_i, \btheta_{-i}), i) \in A$, it holds 
    \[
    \phi(\theta_i, \btheta_{-i}) = \phi(\theta_i', \btheta_{-i}),
    \]
    we call it $\phi$-closed.
\end{definition}
The property of $\phi$-closedness may be seen as a {minimality} property for interval pivotal choice rules. Any distinction of type profiles in a $\phi$-closed protection set would lead to a contextual privacy violation for the queried agent.
\begin{lemma}
    The protection sets for $\W$ and $\LQL$ are $\phik$-closed.
\end{lemma}
\begin{proof}
    First, consider the set $\W = \{(\btheta,i): \btheta \in \bTheta, i \in \W(\btheta)\}$. This set is upward closed as a winner with a higher type will remain a winner, leaving the outcome unchanged. Similarly, $\LQL = \{(\btheta, \W(\btheta)): \btheta \in \bTheta, i \in \LQL(\btheta)\}$ is downward closed, as a late-queried loser with a lower type will also be a late-queried loser, leaving the outcome unchanged.
\end{proof}
\begin{lemma}
\label{lem:distinguishing}
    Let $A$ be a $\phi$-closed protection set. A protocol $P$ has a contextual privacy violation in $A$ if it contains a query $v$ and two type profiles $(\theta_i, \btheta_{-i}),(\theta_i', \btheta_{-i})$ that $P$ distinguishes such that $((\theta_i, \btheta_{-i}), i), ((\theta_i', \btheta_{-i}), i) \in A$.
\end{lemma}
\begin{proof}
    We show this lemma for upward closedness. The proof for downward closedness is analogous. We say that \enquote{$i$ is in $A$ under $\btheta$} if $(\btheta, i) \in A$. 
    
    Consider a query $v$ in protocol $P$ and two type profiles $\btheta, \btheta'$ that are distinguished at $v$ such that $i$ is in $A$ for both $(\theta_i, \btheta_{-i})$ and $ (\theta_i', \btheta_{-i})$. Without loss we may assume that $\theta_i < \theta_i'$. As $A$ is upward closed, it must be that $i$ is in $A$ also under $(\theta_i', \btheta_{-i})$, and does not change the outcome, by $\phi$-closedness. This means that there is a contextual privacy violation in the protection set $A$.
\end{proof}

\begin{lemma}
\label{lem:beginning}
    There is a unique sequence of non-redundant, $\W$- and $\LQL$-protecting queries on $\bTheta$ that leads to a $\W$- and $\LQL$-protective information state, or termination.
\end{lemma}
\begin{proof}
We may restrict to bimonotonic protocols by \Cref{thm:bimon}, in particular to protocols that only use threshold queries. 

By \Cref{lem:distinguishing}, any query with a threshold higher than the lowest type violates winner privacy because both a negative answer and an affirmative answer may lead to the queried agent winning. This continues to hold until $k$ higher-priority agents have given an affirmative answer to a threshold query about the lowest type. We will check that the $(k+1)^{\text{st}}$ query to an agent will still have the lowest type as a threshold, and leads to a $\W$- and $\LQL$-protective information state. For the first $k$ queries, we only need to determine the order in which agents are queried. 

Denote $h \in \{ 1, 2, \dots, n\}$ the number affirmative answers given so far. Call agents for which a type higher than the running price is not ruled out \enquote{remaining}. We show by induction over $h$ the following statement:
\begin{equation}\label{eq:k1hremaining}
\begin{split}
 &  \text{The only non-redundant query that does not violate winner or late-queried} \\
 &  \text{loser privacy is to the $(k + 1 - h)^{\text{th}}$ remaining agent in priority order.}
   \end{split}
\end{equation}
Note that all queries to agents who dropped out are redundant and it is without loss to abstract from them (by \Cref{lem:redundant}). Also note that all queries to agents $1, 2, \dots, k - h$ make it possible that the queried agent is a winner both in the case of an affirmative and a negative answer, hence violating $\W$-protection by \Cref{lem:distinguishing}. All queries to agents $k - h + 2, k- h + 3, \dots, n$ make it possible that the queried agent is a late-queried loser in the case of an affirmative or a negative answer, hence violating $\LQL$ privacy. Asking the $(k+1-h)^{\text{th}}$ remaining agent does not lead to a $\W$ or $\LQL$ contextual privacy violation.

(As a concrete example, consider the very first query. We may not ask agent $1, 2, \dots, k$ as they might win both with an affirmative and a negative answer. We may also not ask agents $k+2, k+3, \dots, n$, as they might be late-queried losers in case of a positive and a negative answer. Hence, agent $k+1$ must be queried. What's important is that a negative answer from the $(k+1)^{\text{st}}$ unambiguously means that they are not a winner. They cannot be a late-queried loser by definition.)

This determines the order of the first $k$ queries. Observe that agents $2, \dots, m$ for some $m \le n$ have been asked by the property \eqref{eq:k1hremaining}. If we show that agent 1 is the only agent that can be queried, and must be queried for the lowest type, then the resulting information of the designer is a $\W$- and $\LQL$-protective information state (unless the protocol can terminate before this is the case). 

We distinguish three cases. First, asking agents $1, \dots, m$ for a higher threshold might lead to them being a winner for both an affirmative and a negative answer rendering these queries forbidden by \Cref{lem:distinguishing}. Second, asking agents $m+1, \dots, n$ for any threshold might lead them to being a late-queried loser for both an affirmative or a negative answer, rendering these forbidden by \Cref{lem:distinguishing}. Third, asking agent 1 for any threshold higher than the lowest type might lead to them being a winner for both an affirmative and a negative answer, making this query forbidden, also by \Cref{lem:distinguishing}. Asking agent 1 a threshold query with the lowest type does not lead to $\W$ and $\LQL$ violations. Note that asking agent 1 complies with the agent order in \eqref{eq:k1hremaining}.

Either the designer reaches information state $\bTheta_v$ in which $k+1$ agents have given an affirmative answer. In this case, the designer's information state $\bTheta_v$ is $\W$- and $\LQL$-protective. Or, all agents except for $k$ have given negative answers. In this case, the protocol may terminate (allocating to the remaining $k$ agents at the current running price).
\end{proof}
We reduce the steps on internal nodes $\bTheta_v \neq \bTheta$ to \Cref{lem:beginning}.
\begin{lemma}
\label{lem:step}
    On a $\W$- and $\LQL$-protective set $\bTheta_v$, there is a unique sequence of non-redundant, $\W$- and $\LQL$-protecting queries, that leads to a $\W$- and $\LQL$-protective set or termination.
\end{lemma}
\begin{proof}
    We reduce this to \Cref{lem:beginning}. Observe that any query to a dropped-out agent is redundant. We may therefore drop them from consideration. Also observe that any query with a threshold lower than the running price is either trivial (if it is to an active agent), redundant (if it is to a dropped-out agent) or could lead to the queried agent being a late-queried winner for both an affirmative and a negative answer (if it is to an idle agent). Hence, we may reduce to a situation in which the designer faces only active agents and needs to compute the choice rule $\phik$ on a restricted type space $\Theta' \coloneqq \{ \theta' \mid \theta' > \tilde\theta\}$, where $\tilde\theta$ is the current running price, and apply \Cref{lem:beginning}.
\end{proof}

\begin{remark}\label{rem:ascj-contest-equivalence}
So far, we have not reasoned about whether the order of querying described in \eqref{eq:k1hremaining} corresponds to the ascending-join partition as presented in \Cref{sec:maxcp}. To see this, we first observe that the $(k+1)$st agent stating that they have a type at least the running price $t$ rules out all tentative assignments $(W, t)$, $W \subseteq \{1, 2, \dots, n\}$. So we need to show that for each running price, the query to agents given by \eqref{eq:k1hremaining} is the same as asking agents repeatedly in the order $i= 1, 2, \dots, n$ whether they can rule out a tentative outcome. We demonstrate this by induction. Assume that there have been $h\in \{ 0, 1, \dots, k\}$ affirmative answers so far, and $l \in \{ 0, 1, \dots, n-k-1\}$ negative answers to queries for the running price. We claim that the query according to \eqref{eq:k1hremaining} rules out the outcome allocating to the highest-priority remaining agents and that no agent of a lower index has sufficient information by themselves to do so. First observe that a query to an agent that is not active anymore cannot rule out the current active assignment (as their type is known and they are losers). Remaining agent $i = 1, 2, \dots, k - h$ are unable to rule out the outcome that agents $1, 2, \dots, k-h$ win (which we argued in the proof of \Cref{lem:beginning}) in addition to the $h$ agents who already gave affirmative answers. The first agent to rule this outcome is the $(k + 1 - h)$th remaining agent, who can rule out all assignments where they are not a winner at the current running price. The unique smallest (in the strong set order) set of winners that is not ruled out is that the highest $k$ of all agents who have not given a negative answer yet win.
\end{remark}

\section{Impossibility Results in Non-Auction Domains}\label{sec:full_cpimpossibilities}
Many social choice functions exhibit collective but not individual pivotality on some subset of the type space. We give some examples of such choice functions commonly studied in the literature and used in practice which, by \Cref{cor:collectindividual}, do not admit a fully contextually private protocol. That is, any protocol that computes these choice rules must produce contextual privacy violations. 

\subsection{Housing Assignment}\label{subsec:houseassignment}
Consider first the house assignment problem \cite{shapley1974cores}. All agents are initially endowed with an object from $C$. The type space is the universal domain of all preferences over $C$. Denote the initial assignment by an injective function $e \colon N \to C$, where $e(i) \in C$ refers to agent $i$'s initial endowment. For our result it will be irrelevant whether the endowments are private information or known to the designer. We call a choice rule $\phi$ \emph{individually rational} if for all $i \in N$ and all $\btheta \in \bTheta,$
\[
\phi_i(\boldsymbol\theta) \succeq_{i} e(i). 
\]
\begin{restatable}{proposition}{houseassignment}
\label{thm:houseassignment}
Assume $\lvert N \rvert \ge 2.$ Every partition $P$ for an individually rational and efficient housing assignment choice rule $\phi$ produces contextual privacy violations. 
\end{restatable}
\begin{proof}
The proof shows that there is collective but not individual pivotality. Consider two agents $i$ and $j$ and two possible preference profiles for each agent. For agent $i$, consider a type $\theta_i$ which contains $e(i) \succ_i e(j),$ and a type $\theta_i'$ which contains $e(j) \succ_i e(i).$ For agent $j$, consider $\theta_j$ which contains $e(j) \succ_j e(i),$ and a type $\theta_j'$ which contains $e(i) \succ_j e(j).$ Hold fixed all other types $\btheta_{-ij},$ to be such that they prefer their own endowment to all other objects, i.e. $\btheta_{-ij} = (e(k) \succeq_k c$ for all $c \in C \setminus \{e(k)\})_{k \in N \setminus\{i,j\}}$.

When the type profile is $(\theta_i, \theta_j, \btheta_{-ij}),$ the agents both prefer their own endowment to the other's. When the profile is $(\theta_i', \theta_j, \btheta_{-ij}),$ or $(\theta_i, \theta_j',\btheta_{-ij}),$ they both prefer $i$'s endowment and $j$'s endowment, respectively. When $(\theta_i', \theta_j', \btheta_{-ij})$, they each prefer the other's endowment to their own. Let $x$ be the outcome in which both agents retain their endowment, i.e. $x=(i, e(i)), (j, e(j))$. Let $y$ be the outcome in which each agent gets each other's endowment $y=(i, e(j)), (j, e(i))$. Then, individual rationality makes the requirements shown on the mid-left in \Cref{fig:square_ir_eff}: $\phi(\theta_i, \theta_j, \btheta_{-ij}) = (\theta_i, \theta_j', \btheta_{-ij})= (\theta_i', \theta_j, \btheta_{-ij})=x$. Meanwhile, efficiency requires $\phi(\theta_i', \theta_j', \btheta_{-ij})=y$ (shown on the mid-right in \Cref{fig:square_ir_eff}). Hence, by \Cref{cor:collectindividual}, every partition for an individually rational and efficient choice rule produces contextual privacy violations.
\end{proof}
\begin{figure}
    \centering
    \includegraphics[scale=.15]{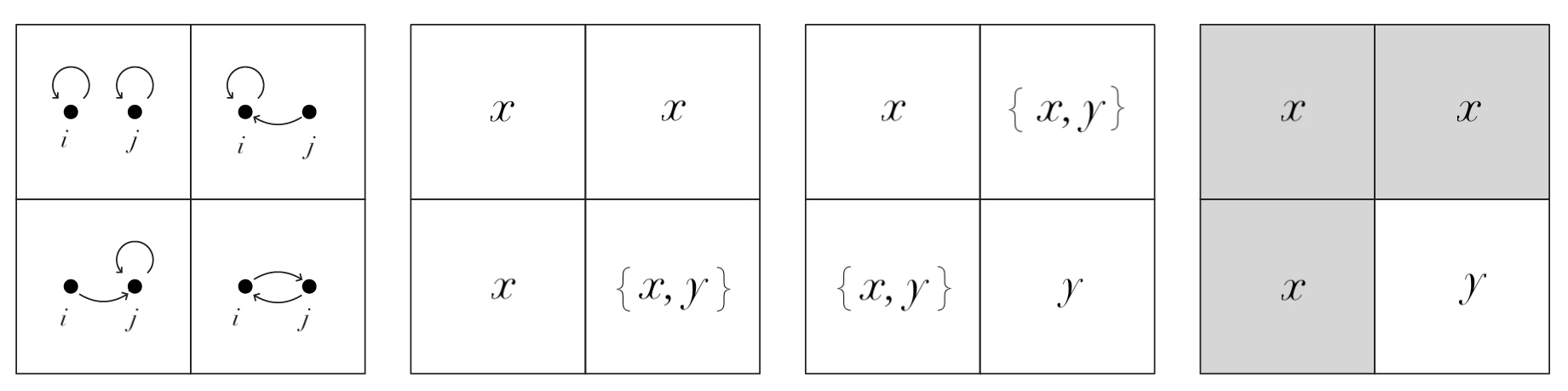}
    \caption{Constructing collective but not individual pivotality for house assignment. Type profiles $\{\theta_i, \theta_i'\} \times \{\theta_j, \theta_j'\}$ used in the proof (left, arrows denote whether agent prefers own or other's endowed object); required outcomes for each type profile under individual rationality (mid-left), under efficiency (mid-right), and under both efficiency and individual rationality (right).}
    \label{fig:square_ir_eff}
\end{figure}

The failure of contextual privacy in the house assignment problem is illuminating. There will be a violation at any type profile $\btheta$ in which there is a pair of agents who each prefer their endowment to the other's. Note that in a setting with many agents and many objects, there are many such type profiles $\btheta$ in $\bTheta$. In such cases, the conjunction of efficiency and individual rationality produce an instance where the agents are collectively but not individually pivotal.

\subsection{Voting}\label{subsec:voting}

A natural case of collective but not individual pivotality arises in voting. If three agents vote on a binary issue, then a single agent cannot certainly change the outcome, but two agents together can. This issue holds more generally for generalized median voting rules. These are defined on an ordered outcome space $(X, \le)$. The type space is the set of single-peaked preferences with respect to $\le$.\footnote{A preference $\preceq$ on $(X, \le)$ is single-peaked if  $x<x'\leq \theta_i \implies x' \succ_i x$ and $x>x'\geq \theta_i \implies x' \succ_i x$.} 

We consider a commonly studied class of voting rules. Namely, we study \emph{generalized median voting rules}. As shown in  \citet{genmedian}, this class is the class of all anonymous, strategy-proof and Pareto-efficient voting rules, where anonymity means that the outcome cannot depend on the identity of any agent. A generalized median voting rule takes as input submitted peaks of agents' preferences $\theta_1, \theta_2, \dots \theta_n$ as well as \emph{phantom ballots} $k_1, k_2, \dots, k_{n-1} \in X \cup \{ -\infty, \infty\}$, where for all $x \in X$, $-\infty < x < \infty$. The output is the median of the submitted votes and phantom votes, i.e.  
\[
\phi_{(k_1,\, k_2,\, \dots,\, k_{n-1})}(\btheta) = \median (\theta_1, \theta_2, \dots, \theta_n, k_1, k_2, \dots, k_{n-1}).
\]
\begin{restatable}{proposition}{semv}\label{prop:se_mv}
Assume $\lvert N \rvert \ge 2$ and $\lvert X \rvert \ge 2$ and the preference domain of single-peaked preferences. Any iterative partition $P$ for a generalized median voting rule that is neither the maximum nor the minimum rule produces contextual privacy violations.
\end{restatable}

In such voting rules, classical examples of collective but not individual pivotality arise from voting thresholds, e.g., a qualified majority. Consider a voting rule with at least two phantom ballots $k_i$ on an alternative $x$ that would win under a type profile $\btheta$. If there are two agents on the right of this alternative, and \emph{both} of them having a type on the left of $x$, it would change the outcome---they are \emph{collectively pivotal}---and yet it is still possible that neither of them alone would change the outcome.

\begin{proof}
    Consider two adjacent types $\theta, \theta' \in \Theta$ and $\btheta_{-ij} \in \bTheta_{-ij}$ such that 
    \[
    \theta = \median (\btheta_{-ij}, k_1, k_2, \dots, k_{n-1})
    \]
    and exactly one type in $\btheta_{-ij}, k_1, k_2, \dots, k_{n-1}$ is $\theta$. The type profiles $(\theta, \theta, \btheta_{-ij})$, $(\theta', \theta, \btheta_{-ij})$ and $(\theta, \theta', \btheta_{-ij})$ will all result in outcome $\theta$. However, $(\theta', \theta', \btheta_{-ij})$ will result in $\theta'$. This hence produces an instance of collective pivotality without individual pivotality and shows that there must be a contextual privacy violation for at least one of the agents $i,j$.
\end{proof}

\section{The Overdescending-Join Protocol}\label{sec:overdesc}
In \Cref{sec:maxcp}, we defined the ascending-join protocol and showed that it is a maximally contextually private protocol for the $k$-item Vickrey auction. In this appendix, we consider a descending protocol $P_{\ODESCJ}$, and use result from \Cref{sec:maxcp} to show that it is maximally contextually private. In the \emph{overdescending-join protocol}, for every type profile $\tilde \bTheta$, the designer asks agents $i=n, n-1, \dots, 1$ whether they can rule out $\max \phi(\tilde \bTheta)$, where, as in the main text, the outcomes are ordered first by price then in the strong set order on the set of winners. We chose the next outcome as the maximal outcome that is not ruled out by an agent.

For the $k$-item Vickrey auction rule $\phik$, $k = 1, 2, \dots, n-1$ we obtain the following two results.
\begin{proposition}\label{prop:overdescjmaxcp}
    The overdescending-join protocol is maximally contextually private for the $k$-item Vickrey auction $\phik$.
\end{proposition}
\begin{proof}
    Let $P$ be a protocol for $\phik$ such that $\Gamma(P, \phik) \subseteq \Gamma(P_{\ODESCJ}, \phik)$. We construct the \emph{reflected} social choice function $\phik^r \colon \bTheta^r \mapsto (W^r, t^r)$, which, for any $\btheta \in \bTheta$ has $i \in W^r(\btheta)$ if and only if $i \notin W(\btheta)$ and $i$ is not the $k$-th highest agent (with ties broken in the order $i= 1, 2, \dots, n$), and $t^r(\btheta) = t(\btheta)$. Define the type space $\Theta' = (\Theta, \le')$ with the reflected order, where $\theta \le' \theta' \iff \theta \ge \theta$. Also define the reflected agent order $\btheta^r = (\theta_n, \theta_{n-1}, \dots, \theta_1)$. Observe that $\phik^r \colon \bTheta' \to 2^{N} \times \bTheta$ is a $(n-k-1)$-item Vickrey auction,
    \begin{equation}
    \phik^r(\btheta^r) = \phi_{V\text{-}(n-k-1)} (\btheta). \label{eq:reflectionscf}
    \end{equation}
    We also define the reflected protocols $P^r$ and $P_{\ODESCJ}^r$, which, whenever $P$ resp. $P_{\ODESCJ}$ queries agent $i$, it queries agent $n - i + 1$, and agent action $\tilde \Theta \subseteq \Theta$ is translated to agent action $\{ \theta^r | \theta \in \tilde \Theta\}$. Observe that 
    \begin{equation}
    P_{\ODESCJ}^r = P_{\ASCJ} \text{ and } P_{\ASCJ}^r = P_{\ODESCJ} \label{eq:protocolreflection}
    \end{equation}
     and that reflection preserves the contextual privacy order. i.e. for any two protocols $P_1$, $P_2$
    \begin{equation}
    \Gamma(P_1, \phi) \subseteq \Gamma(P_2, \phi) \implies \Gamma(P^r_1, \phi^r) \subseteq \Gamma(P^r_2, \phi^r). \label{eq:reflectionandcp}
    \end{equation}
    Hence, as we assume $\Gamma(P, \phik) \subseteq \Gamma(P_{\ODESCJ}, \phik)$, we have by \eqref{eq:reflectionandcp} that 
    \[
    \Gamma(P^r, \phik^r) = \Gamma(P^r, \phi_{V\text{-}(n-k-1)}) \subseteq \Gamma(P_{\ASCJ}, \phi_{V\text{-}(n-k-1)}) =\Gamma(P_{\ODESCJ}^r, \phik^r).   
    \]
    Here, the last equality is by \eqref{eq:reflectionscf}. By \Cref{lemma:maxcpkpa}, it must be that 
    \[
    \Gamma(P^r, \phi_{V\text{-}(n-k-1)}) = \Gamma(P_{\ASCJ}, \phi_{V\text{-}(n-k-1)}),
    \]
hence $\Gamma(P, \phik^r) = \Gamma(P_{\ODESCJ}^r, \phik^r)$, concluding the proof.
\end{proof}
We can express which agents' privacy is protected with the overdescending protocol. Recall that $d(\btheta)$ is the index of the agent determining the price. Define
\begin{align*}
\L &= \{ (\btheta, i) \mid \text{$i$ is neither a winner nor determines the price at $\btheta$}\} \\
\LQW &= \W (\btheta) \cap \{ i \mid i < d(\btheta) \}.
\end{align*}
\begin{proposition}\label{prop:overdescjdelay}
    Let $\btheta \in \bTheta$. The overdescending-join protocol protects losing and late-queried winning bidders. That is, for all $\btheta \in \bTheta$,
    \[
       i \in \L(\btheta) \cup \LQW(\btheta) \Rightarrow (\btheta, i) \notin  \Gamma (\phik, P_{\ODESCJ}).
    \]
\end{proposition}
A naive overdescending protocol which would ask all agents from the top until the $(k+1)$st highest bid is found can protect losers, but does not protect late-queried winners. Delay in querying protects these winners in addition to losers. For simplicity, we provide a direct proof of this result, not using the reflection approach in the proof of \Cref{prop:overdescjmaxcp}.
\begin{proof}
The protocol stops when the $(k+1)$st highest bid is found, hence only revealing the fact that losers have a type at most $\btheta_{[k+1]}$. As the queries are going from $i=n$ down to $1$, the first time that late-queried winners are asked, this means that the type $\btheta_{[k+1]}$ has been found, and them saying that their type is at least $\btheta_{[k+1]}$. This shows that
\[
   (\btheta, i) \notin  \Gamma (\phik, P_{\ODESCJ}) \impliedby i \in \L(\btheta) \cup \LQW(\btheta).
\]
\end{proof}

\section{Relative Informativeness}\label{sec:informativeness}

Our analysis has emphasized contextual privacy, a notion explicitly linked to the social choice rule $\phi$ and sensitive to the privacy concerns of individual agents. However, privacy criteria that are not based on the choice rule can offer insights about informative revelation that are largely complementary to those revealed through a contextual privacy analysis. In particular, as noted in \Cref{sec:whycp}, the notion of \textit{relative informativeness} \citep{mackenzie2020revelation,segal2007communication} can allow a designer to further narrow down their selection of a mechanism within a set of mechanisms that are equivalent in terms of contextual privacy. 

\begin{definition}
    An iterative partition $P$ is \emph{less informative} than an iterative partition $P'$, denoted $P \preceq_{\RI} P'$, if any type profiles $\btheta$, $\btheta'$ that $P$ distinguishes are also distinguished by $P'$. 
\end{definition}
 We note that the relative informativeness partial order is strictly coarser than contextual privacy. 
\begin{proposition}
\label{prop:relativeinformativeness}
    Let $P, P'$ be iterative partitions for $\phi$. Then 
 \begin{itemize}
     \item[(i)] for all $P, P'$ for $\phi$,
      \[
    P \preceq_{\RI} P' \implies \Gamma(P, \phi) \subseteq \Gamma(P',\phi).
    \]
    \item[(ii)] there exist $P, P'$ for $\phi$ such that $P \not\preceq_{\RI} P', P' \not\preceq_{\RI} P$ but $\Gamma(P,\phi) \subset \Gamma(P',\phi).$
 \end{itemize}
\end{proposition}
\begin{proof}
We begin with (i) and prove the contrapositive. Assume that $P \preceq_{\RI} P'$ but $\Gamma(P, \phi) \not\subseteq \Gamma(P',\phi)$. Then, there is $(\btheta, i) \in \Gamma(P, \phi) \setminus \Gamma(P', \phi)$. That means that there is a $\btheta' = (\theta_i', \btheta_{-i})$ that are distinguished by $P$ but not $P'$ (these satisfy $\phi(\btheta) = \phi(\btheta')$, but this is insubstantial for this claim). This means that $P \not\preceq_{\RI} P'$.

For claim (ii), let $n=2$, $\Theta_1=\{0,1\}$, $\Theta_2=\{0,1,2\}$ and let the choice rule be the simple indicator function $\phi(\theta_1,\theta_2)=\1_{\theta_1=1}$, i.e. the outcome is 1 if and only if $\theta_1=1$, otherwise it is zero. 

We consider two iterative partitions. $P$ first asks agent 1 is \enquote{Is $\theta_1=1$?} On an affirmative answer, ask agent $2$ \enquote{Is $\theta_2=0$?}, otherwise stop. This leads to the final partition
\[
P=
\Bigl\{
      \underbrace{\{(0,0),(0,1),(0,2)\}}_{\text{outcome }0},\;
      \underbrace{\{(1,0)\}}_{\text{outcome }1},\;
      \underbrace{\{(1,1),(1,2)\}}_{\text{outcome }1}
\Bigr\}.
\]
A second partition $P'$ again asks agent $1$ \enquote{Is $\theta_1=1$?} On an affirmative answer, ask agent $2$ \enquote{Is $\theta_2=2$?} On a negative answer, ask agent $2$ \enquote{Is $\theta_2=0$?} The final partition for $P'$ is
\[
P'=
\Bigl\{\underbrace{\{(0,0)\}}_{\text{outcome 0}},
      \underbrace{\{(0,1),(0,2)\}}_{\text{outcome }0},\;
      \underbrace{\{(1,0), (1,1)\}}_{\text{outcome }1},\;
      \underbrace{\{(1,2)\}}_{\text{outcome }1}
\Bigr\}.
\]

$P$ distinguishes $(1,0)$ from $(1,1)$ whereas $P'$ does not; conversely, $P'$ distinguishes $(0,0)$ from $(0,1)$ while $P$ does not.  
Thus $P\not\preceq_{\RI}P'$ and $P'\not\preceq_{\RI}P$.

Because $\phi$ ignores agent 2’s type, \emph{any} difference in $\theta_2$
is economically irrelevant.  In $P$ the designer learns $\theta_2$
\emph{only} when $\theta_1=1$, giving
\[
  \Gamma(P,\phi)=\{((1,0),2),\;((1,1),2),\;((1,2),2)\}.
\]
In $P'$ she obtains information about $\theta_2$ for both $\theta_1 = 1$ and $\theta_1 = 0$, hence
\[
  \Gamma(P',\phi)=\Gamma(P,\phi)\cup
  \{((0,0),2),\;((0,1),2),\;((0,2),2)\}.
\]
Hence $\Gamma(P,\phi)\subset\Gamma(P',\phi)$ and therefore
$P\preceq_{CP}P'$. We have found two protocols that are \emph{incomparable} in the relative-informativeness but are strictly ranked by contextual
privacy, completing the proof of part (ii). 
\end{proof}
Relative informativeness can be used to refine and select elements that are contextual privacy equivalent.\footnote{We thank an anonymous reviewer for offering a version of this example.}
\begin{example}{Equivalence under $\preceq_{\CP}$ but comparability under $\preceq_{\RI}$.}
Consider $n = 1$, $\Theta = \{1, 2, \dots, k\}$ and $\phi(\theta) = \1_{\theta = k}$. In this case, there is a unique maximally contextually private protocol $P_{\theta=k}$ which asks the agent \enquote{is your value equal to $k$?} Any other two protocols $P, P'$ for $\phi$ that are not $P_{\theta=k}$ are contextual privacy equivalent to each other. (Formally, all protocols except for the one with the final partition $\{\{k\}, \{1, 2, \dots, k-1\}\}$ are contextual privacy equivalent.) 
    
However, it is clear that some protocols yield less information overall, and thus, for a privacy-conscious designer, they may be more desirable. For example, consider the protocol $P_{\theta=1, \theta=k}$ which asks the agent \enquote{is your value equal to $1$?} and then \enquote{is your value equal to $k$?} This yields a partition $\{\{1\},\{k\}, \{2, \dots, k-1\}\}$. And then consider the protocol $P_{\reveal}$ which asks the agent to reveal her type exactly. This yields a partition $\{\{1\},\{2\},\{3\},\dots, \{k\}\}$. $P_{\theta=1, \theta=k}$ and $P_{\reveal}$ are contextual privacy equivalent. However, there is clearly a sense in which $P_{\reveal}$ reveals more information---indeed, this is captured by relative informativeness, $P_{\theta=1, \theta=k}\prec_{\RI} P_{\reveal}.$
\end{example}
To summarize the discussion thus far, relative informativeness and contextual privacy are complementary. Among a set of protocols that are \emph{equivalent} in contextual privacy order, relative informativeness \emph{can sometimes make further comparisons}. Among a set of protocols that are \emph{incomparable} in the relative informativeness order, contextual privacy \emph{can sometimes compare them}.

This allows us to evaluate the relative informativeness of protocols considered in this paper. Consider first the protocol for the first-price auction rule that asks agents in decreasing order the value of their bid function $b_i(\theta)$ whether their type is above some threshold type $\tilde\theta$. Because of lexicographic tiebreaking, ask in the order $i=1, 2, \dots, n$. We call this protocol $P_{\DESC}$.
\begin{proposition}\label{prop:informativeness_desc}
    $P_{\DESC}$ is minimally relatively informative among protocols computing $\phi_{\FPA}$.
\end{proposition}
\begin{proof}
    The final partition of $P_{\DESC}$ consists of sets with the same values: 
    
    \[\left(\min \argmax_{i = 1, 2 \dots, n} b_i(\theta_i), \max_{i = 1, 2, \dots, n} b_i(\theta_i)\right).\] 
    Hence, in particular, it distinguishes only type profiles that lead to different outcomes. Any coarser partition of $\bTheta$ (not only final partitions of protocols) can therefore not compute $\phi_{\FPA}$.
\end{proof}
The result for the first-price auction rule shows a strong kind of optimality from a privacy perspective. The protocol only distinguishes type profiles that have different outcomes---it is fully contextually private and also a minimal element in the relative informativeness order. 

We can show a similar result for the ascending-join protocol. To show that the ascending-join protocol $P_{\ASCJ}$ is minimally relatively informative, we need two definitions. We say that a protocol $P$ is \emph{flat} if there is no $\tilde \bTheta$ such that for all agents $i=1, 2, \dots, n$, types $\theta_i, \theta_i' \in \tilde \Theta_i$ and  $ \btheta_{-i} \in \tilde \bTheta_{-i}$, $(\theta_i, \btheta_{-i})$ and $(\theta_i', \btheta_{-i})$ are distinguished, but not at $\tilde \bTheta$. That is, a protocol is flat if it doesn't delay distinguishing any agent's types until later stages when that same distinction could have already been made at an earlier stage. We say it is \emph{nonredundant} if for all internal nodes $\tilde \bTheta$, $\phi(\tilde \bTheta)$ is non-constant.

\begin{proposition}\label{prop:flatnonred}
    Let $P$ be a flat and nonredundant protocol for $\phi$. Then $P$ is minimally relatively informative for $\phi$.
\end{proposition}
\begin{proof}
    For contradiction assume that there was $P' \prec_{\RI} P$. Let $\tilde \bTheta$ be an earliest node (with respect to the precedence order of $P'$ and $P$) such that the children of $\tilde \bTheta$ are not identical in $P$ and $P'$. There are two cases. First, $P'$ might terminate at $\tilde \bTheta$, leading to a contradiction with nonredundance. Second, $P'$ might query differently from $P$. As $P'$ is weakly less relatively informative, it must be the case that all type profiles that are distinguished at $\tilde \bTheta$ in $P'$ must be distinguished for all children of $\tilde \bTheta$ in $P$, but not at $\tilde \bTheta$, yielding a contradiction with flatness. Therefore, such a node cannot exist and $P' = P$.
\end{proof}
\begin{proposition}\label{prop:informativeness_ascj}
    $P_{\ASCJ}$ is a flat and non-redundant protocol for $\phik$. In particular, $P_{\ASCJ}$ is minimally relatively informative among protocols computing $\phik$.
\end{proposition}
\begin{proof}
    For bimonotonic protocol like $P_{\ASCJ}$ it is sufficient to show that there is no node $\tilde \bTheta$ such that the same threshold query $(j, t')$ to agent $j$ at threshold $t$ is asked for all $\btheta_{-i} \in \tilde \bTheta_{-i}$. By the structure of $P_{\ASCJ}$ we can identify each node $\tilde \bTheta$ of $P_{\ASCJ}$ with a threshold query $(i, t)$. There are two cases for queries $(j, t')$. 
    
    The first one is where $(j, t')$ is a threshold query for a higher threshold, $(j, t')$, $j = 1, 2, \dots, n$, $t' > t$. These are not queried if the protocol terminates at running price $t$, which is possible if all but $k$ agents $i = 1,2, \dots, n$ have type $\theta_i \le t$. Hence, flatness holds for such nodes $\tilde \bTheta$ and future queries $(j, t')$. A second case is a query to another agent at the same running price, $(t, j)$, $j \in N \setminus \{0, i\}$. We distinguish two cases. If $j < i$, then by \eqref{eq:k1hremaining} if all agents $j' \ge i$ drop out at $t$, $P_{\ASCJ}$ does not query $(t, j)$, but terminates. If $j > i$, then if all agents $j' \le i$ stay in at cutoff $t$, then $(j, t)$ is not queried, but the price is raised \enquote{before it's $j$'s turn.}
\end{proof}
Using symmetry like in the proof of \Cref{prop:overdescjmaxcp} it can be shown that the overdescending-join protocol presented in \Cref{sec:overdesc} is minimally relatively informative.
\end{document}